	\pgfplotsset{compat=newest}
	\newtheorem{theorem}{Theorem}[section]  
	\newtheorem{cor}[theorem]{Corollary}
	\newtheorem{prop}[theorem]{Proposition}
	\newtheorem{lemma}[theorem]{Lemma}
	\numberwithin{equation}{section}
	\theoremstyle{definition}
	\newtheorem{rem}[theorem]{Remark}
	\newcommand{\caF}{{\mathcal F}}
	\newcommand{\caH}{{\mathcal H}}
	\newcommand{\caK}{{\mathcal K}}
	\newcommand{\caL}{{\mathcal L}}
	\newcommand{\caN}{{\mathcal N}}
	\newcommand{\caQ}{{\mathcal Q}}
	\newcommand{\caV}{{\mathcal V}}
	\newcommand{\bbR}{{\mathbb R}}
\begin{document}
		\title{Two-sided Lieb-Thirring bounds}
		
		\author{Sven Bachmann$^1$, Richard Froese$^1$, Severin Schraven$^1$ \\
			\\
			$^1$ Department of Mathematics, The University of British Columbia, \\
			Vancouver, BC V6T 1Z2, Canada}
		
		\maketitle
		
		\begin{abstract}
We prove upper and lower bounds for the number of eigenvalues of semi-bounded Schr\"odinger operators in all spatial dimensions. As a corollary, we obtain two-sided estimates for the sum of the negative eigenvalues of atomic Hamiltonians with Kato potentials. Instead of being in terms of the potential itself, as in the usual Lieb-Thirring result, the bounds are in terms of the landscape function, also known as the torsion function, which is a solution of $(-\Delta + V +M)u_M =1$ in $\bbR^d$; here $M\in\bbR$ is chosen so that the operator is positive. We further prove that the infimum of $(u_M^{-1} - M)$ is a lower bound for the ground state energy $E_0$ and derive a simple iteration scheme converging to $E_0$. 
		\end{abstract}

		\section{Introduction}\label{sect:intro}

The Lieb-Thirring (LT) inequalities
\begin{equation*}
\mathrm{tr}((-\Delta +V)_-^\gamma) \leq L_{\gamma,d} \int_{\mathbb{R}^d} V_-(x)^{\gamma+\frac{d}{2}} dx
\end{equation*}
are bounds on the moments of the negative eigenvalues of atomic Hamiltonians in terms of the negative part of the potential~$V=V_+ - V_-$. In the context of the stability of matter for which it was originally developed~\cite{lieb1976inequalities}, only an upper bound was needed and much of the vast amount of subsequent work extending and refining the estimates, see for example~\cite{sh1979asymptotic, frank2022schrodinger}, has similarly focussed on upper bounds. Among the few results about lower bounds, we mention~\cite{schmincke1978schrodinger, grigor2004eigenvalues, damanik2007schrodinger} in special cases as well as~\cite{safronov2008estimates, molchanov2012bargmann} for somewhat different estimates.

In this work, we give a simple proof of a two-sided estimate on the number of eigenvalues and their moments. The upper and lower bounds are matching in the sense that they differ only by a scaling and a multiplicative constant, but we are not able to conjecture the sharp values. The main difference with the LT inequality is that the bounds are not in terms of the potential $V$, but rather in terms of what we shall refer to as the \emph{effective potential} $\frac{1}{u_M}$ where $u_M$ is a solution of 
\begin{equation}\label{landscape1}
(-\Delta + V +M\cdot\mathds{1})u_M =1.
\end{equation}
Unlike in~\cite{bachmann2023counting} where the potential $V$ is assumed (among other hypotheses) to be a nonnegative function, our assumption here is that the corresponding Schr\"odinger operator is semi-bounded below,
\begin{equation*}
E_0 = \inf\mathrm{spec}(-\Delta + V) >-\infty.
\end{equation*}
The constant in~(\ref{landscape1}) is then $M>-E_0$. Among the auxiliary results of this paper, we will in particular show the existence of a weak solution $u_M$ in the whole space, see also~\cite{poggi2021applications}.

The equation \eqref{landscape1} has been extensively studied without the potential term, i.e. for $-\Delta u_M=1$, and its solution is known as the \emph{torsion function}. It has been used to establish optimal Sobolev constant, estimate the ground state energy of the Laplacian and to obtain pointwise estimates for the eigenfunctions (and more general solutions) of elliptic operators \cite{payne1981bounds, sperb2003bounds, van2009hardy, giorgi2010principal, van2012estimates, van2014torsion, aghajani2014pointwise, van2017spectral}. Following this line of work, the more general case with a potential term was considered only recently \cite{vogt2019estimates}.

The connection of the torsion function with the phenomenon of localization for random Schr\"odinger operators was discovered in~\cite{filoche2012universal} in the case $M=0$. In this context, it is mostly called the \emph{landscape function} and it has been observed to yield excellent estimates on the density of states throughout the spectrum~\cite{arnold2019computing, david2021landscape, desforges2021sharp, arnold2022landscape} and to predict both the localization centres and exponential decay of localized eigenfunctions~\cite{arnold2019localization, filoche2021effective, wang2021exponential}, see also~\cite{berg2022efficiency}. As pointed out in \cite{mugnolo2023pointwise} (or also implicitly in \cite[page 2903]{steinerberger2017localization}) the predicting power of the landscape function for the localization of eigenfunctions relies solely on the fact that the resolvent of the Schr\"odinger operator is positivity-preserving, a key property that we will use here as well.

The present work and our previous~\cite{bachmann2023counting} focus on spectral estimates using the landscape function in the non-random case. In this context, the effective potential plays the role of a smoothing of the original potential at the right scale: It allows in particular for Cwikel-Lieb-Rozenblum (CLR)-type bounds \cite{lieb1976bounds, rozenblum1976distribution, cwikel1977weak} in cases where a strict CLR bound in terms of $V$ fails. Specifically, the eigenvalue counting function can be estimated in terms of the volume of the sublevel sets of the effective potential. In this context, we mention~\cite{poggi2021applications,hoskins2024magnetic} for a magnetic version of the landscape function as well as the closely related approaches via Brownian motion \cite{steinerberger2017localization}, regularizing kernel \cite{lu2022fast}, respectively maximal functions \cite{otelbaev1976bounds, otelbaev1979imbedding, shen1996eigenvalue, shen1998bounds, bugliaro1997lieb, shen1998moments, do2023spectral}. 

We shall see here that the key property for the effective potential to be useful in estimating the eigenvalues of the Schr\"odinger operator is that $u_M$ satisfies a Harnack inequality, see~(\ref{Harnack Moser inequality}) below. While we are not aware of the most general assumptions under which this holds, the results of~\cite{AizenmanSimon, chiarenza1986harnack} yield that it does for atomic potentials, namely for negative $V$'s that are also in the Kato class~(\ref{KatoClass}), which are our main motivation here.

We state our main results in details in Section~\ref{sect:results}. In Section~\ref{sect:CLR} we prove an abstract two-sided CLR-type bound conditional on the existence of a landscape function in the whole space and on a Harnack inequality. We then prove the validity of these assumptions in the case of Kato-class potentials in Section~\ref{sect:existence}. In this case, the bounds can in fact be notably improved. In Section~\ref{sect:groundstate}, we concentrate on the relationship between the infimum of the effective potential and the bottom of the spectrum: In particular, we propose an iterative scheme which we converges to the ground state energy. Section~\ref{sect:examples} serves as both illustration of the results and analysis of their sharpness: On the one hand, in the case of the shallow square well in one dimension, we show how well the bottom of the effective potential approximates the ground state energy, and on the other hand we prove that our bounds match the asymptotic distribution of the eigenvalues at the bottom of the essential spectrum, in the examples of attractive radial potentials with a power law singularities. The proof of the two-sided Lieb-Thirring inequalities in terms of $\frac{1}{u_M}-M$ is provided in Section~\ref{sect:LT}. 
		
		\section{Results}\label{sect:results}
	
One of the central objects of interest in this paper is the eigenvalue counting function for the Schr\"odinger operator $-\Delta + V$ on $L^2(\bbR^d)$, which is the rank of the spectral projection associated with $(-\infty,\mu]$, namely
\begin{equation}\label{def:counting}
	\mathcal{N}^{V}(\mu) = \mathrm{dim}\, \mathrm{Ran}\left(\mathbbm{1}_{(-\infty,\mu]}(-\Delta +V)\right).
\end{equation}
While most of the existing results in the literature are bounds on $\mathcal{N}^{V}$ in terms of certain integrals of $V$, we aim here for estimates in terms of the effective potential, $\frac{1}{u_M}$ where $u_M$ is, formally, the solution of~(\ref{landscape1}). To start with, we shall assume its existence, namely that there is a $u_M\in H_\mathrm{loc}^1(\mathbb{R}^d)$ such that 
\begin{equation} \label{eq:weak form}
	\int_{\mathbb{R}^d} \big(\nabla u_M(y) \cdot \nabla \varphi(y) + (V(y)+M) u_M(y) \varphi(y) \big) dy = \int_{\mathbb{R}^d} \varphi(y) dy
\end{equation}
for all $\varphi\in C_c^\infty(\mathbb{R}^d) $. Our first bounds will be in terms of the volume~$\mathcal{V}_M(\mu)$ of the sublevel sets of the effective potential
\begin{equation*}
\mathcal{V}_M(\mu) = \int_{\{x\in \mathbb{R}^d \ : \ \frac{1}{u_M}\leq \mu \}} dx.
\end{equation*}
For the bound to hold, we assume what we shall refer to as a Harnack-Moser inequality, namely that there is $C_{HM}>0$ such that 
\begin{equation} \label{Harnack Moser inequality}
		\sup_{Q} u_M \leq C_{HM} \left( \inf_Q u_M + \ell(Q)^2 \right)
\end{equation}
for all cubes $Q\subseteq \mathbb{R}^d$ with sidelength $\ell(Q)$ (for exact definitions, see Section~\ref{sect:CLR}).

\begin{theorem} \label{thm:main}
	Let $V\in L^1_\mathrm{loc}(\mathbb{R}^d)$ be such that
\begin{equation*}
E_0=\inf \mathrm{spec}(-\Delta +V)>-\infty.
\end{equation*}
Let $M>-E_0$ and assume that there is a solution $u_M$  of~\eqref{eq:weak form} which is positive, such that $u_M\in H_\mathrm{loc}^1(\mathbb{R}^d)\cap L^\infty_\mathrm{loc}(\mathbb{R}^d)$ and $\frac{1}{u_M}\in L^\infty_\mathrm{loc}(\mathbb{R}^d)$. If additionally~(\ref{Harnack Moser inequality}) holds, then there exist constants $c_{0,d},C_{0,d}>0$ depending only on $d, C_{HM}$ such that
\begin{equation} \label{CLR}
		(c_{0,d}\mu)^{d/2} \mathcal{V}_M(c_{0,d}\mu) \leq \mathcal{N}^{V+M}(\mu) \leq  (C_{0,d}\mu )^{d/2} \mathcal{V}_M(C_{0,d}\mu),
\end{equation}
 for all $\mu \in \mathbb{R}$. 

		\end{theorem}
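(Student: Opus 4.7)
The plan is to combine the ground-state substitution $\psi = u_M f$ with a Dirichlet--Neumann bracketing on cubes of sidelength $\ell \sim \mu^{-1/2}$, using the Harnack--Moser inequality \eqref{Harnack Moser inequality} to render $u_M$ essentially constant on each cube. First, testing \eqref{eq:weak form} against $\varphi = u_M f^2$ (permissible by the local regularity assumptions on $u_M$ and $1/u_M$ via a density argument) and integrating by parts yields, for smooth compactly supported $f$,
\[
\int_{\bbR^d} \big(|\nabla \psi|^2 + (V+M)|\psi|^2\big) dx = \int_{\bbR^d} |\nabla f|^2 u_M^2\, dx + \int_{\bbR^d} \frac{1}{u_M}\, |f|^2\, u_M^2\, dx.
\]
Hence $\psi \mapsto \psi/u_M$ is unitary from $L^2(\bbR^d,dx)$ onto $L^2(\bbR^d, u_M^2\,dx)$ and intertwines $-\Delta + V + M$ with a weighted Schr\"odinger form on $L^2(u_M^2\,dx)$ whose effective potential is exactly $1/u_M$. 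Counting eigenvalues of $-\Delta + V+M$ below $\mu$ reduces to counting those of this weighted form below $\mu$.

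Next, I would partition $\bbR^d$ into cubes $\{Q_i\}$ of sidelength $\ell$ with $\ell^2 = \alpha/\mu$, the constant $\alpha = \alpha(d, C_{HM})$ to be fixed, and bracket the weighted form via its Neumann and Dirichlet restrictions to the $Q_i$, giving
\[
\sum_i \caN^{D, Q_i}(\mu) \leq \caN^{V+M}(\mu) \leq \sum_i \caN^{N, Q_i}(\mu).
\]
On a cube $Q$ with $u_- = \inf_Q u_M$ and $u_+ = \sup_Q u_M$, the inequality \eqref{Harnack Moser inequality} gives $u_+ \leq C_{HM}(u_- + \alpha/\mu)$. When $u_- \ge \alpha/\mu$ one obtains $u_+ \leq 2C_{HM} u_-$, so both $u_M^2$ and $1/u_M$ are bilipschitz-equivalent to constants on $Q$, with ratio depending only on $C_{HM}$; in the opposite regime, Harnack forces $1/u_M \geq c\mu$ on all of $Q$ and the potential term alone drives $\tilde q \geq \mu \|f\|^2_{u_M^2}$, so the Neumann contribution vanishes. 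On cubes that do contribute, one rescales $Q$ to the unit cube to reduce the counting to the Neumann eigenvalues of $-\Delta$ plus a constant, below a threshold depending only on $d$ and $C_{HM}$; this is $O_d(1)$.

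Finally, the contributing cubes for the upper bound all lie in a superlevel set $\{u_M \geq 1/(C_{0,d}\mu)\} = \{1/u_M \leq C_{0,d}\mu\}$, obtained from $u_- \ge \alpha/\mu$ by a second application of \eqref{Harnack Moser inequality}; their number is then at most $\caV_M(C_{0,d}\mu)/\ell^d \sim \mu^{d/2}\caV_M(C_{0,d}\mu)$, producing the upper bound. For the lower bound, in every cube contained in $\{1/u_M \leq c_{0,d}\mu\}$ I would exhibit a Dirichlet trial function---a smooth bump supported in $Q$ and essentially constant---whose weighted Rayleigh quotient lies strictly below $\mu$, giving at least one eigenvalue per such cube and hence the matching lower bound. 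The principal obstacle, and where the constants are pinned down, is the calibration in the second step: the additive $\ell^2$ in \eqref{Harnack Moser inequality} couples the cube size nonlinearly to the spectral level $\mu$, so $\alpha$, $c_{0,d}$ and $C_{0,d}$ must be tuned jointly to ensure the cube covers in the upper and lower counts are genuinely controlled by the sublevel sets $\caV_M(C_{0,d}\mu)$ and $\caV_M(c_{0,d}\mu)$ appearing in \eqref{CLR}.
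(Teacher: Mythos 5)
Your first step (the substitution $\psi=u_Mf$) is exactly the paper's conjugation identity, and your upper bound is sound: Neumann bracketing of the weighted form on cubes with $\ell^2=\alpha/\mu$, the dichotomy $u_-\ge\alpha/\mu$ versus $u_-<\alpha/\mu$ driven by \eqref{Harnack Moser inequality} (with $2C_{HM}\alpha<1$ so that in the second regime $\inf_Q 1/u_M>\mu$ and the cube contributes nothing), an $O_{d,C_{HM}}(1)$ Weyl count per contributing cube, and absorption of all constants into $C_{0,d}$ using the monotonicity of $\caV_M$. This is a legitimate variant of the paper's Lemma~\ref{lma:UpperBound}, which instead imposes one mean-zero constraint $\int_Q\varphi/u_M=0$ per bad cube and uses the Poincar\'e inequality, so that the codimension equals the number of bad cubes rather than a constant multiple of it; both are min-max arguments and the difference only affects unspecified constants.

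The lower bound, however, has two genuine gaps as written. First, at the scale your bracketing fixes ($\ell^2=\alpha/\mu$ with $\alpha$ small, in particular $\alpha<d\pi^2$), no Dirichlet trial function in $Q$ can have weighted Rayleigh quotient strictly below $\mu$: discarding the potential and the weight altogether, the lowest Dirichlet eigenvalue of $Q$ is already $d\pi^2/\ell^2=d\pi^2\mu/\alpha>\mu$. You must either run the Dirichlet side on larger cubes, or (as the paper does in Lemma~\ref{lma:LowerBound}) prove $n(\mu)\le\caN^{V+M}(C\mu)$ for a constant $C$ depending on $d,C_{HM}$ and rescale at the very end, which is admissible only because $c_{0,d}$ appears both in the prefactor and in the level in \eqref{CLR}. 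Second, and more importantly, ``one eigenvalue per cube \emph{contained} in $\{1/u_M\le c_{0,d}\mu\}$'' does not yield $(c_{0,d}\mu)^{d/2}\caV_M(c_{0,d}\mu)$: the number of contained cubes times $\ell^d$ is bounded \emph{above} by the measure of the sublevel set, and it can vanish while $\caV_M(c_{0,d}\mu)>0$ (a sublevel set of positive measure need not contain any grid cube), so your chain of inequalities runs in the wrong direction. The missing idea is a further application of \eqref{Harnack Moser inequality}: cubes that merely \emph{intersect} $\{1/u_M\le c'\mu\}$ cover that set, hence their number dominates $\caV_M(c'\mu)/\ell^d$, and for such a cube $\sup_Qu_M\ge 1/(c'\mu)$, so \eqref{Harnack Moser inequality} (with the cube size calibrated against $1/(C_{HM}c'\mu)$) gives $\inf_Qu_M\ge 1/(2C_{HM}c'\mu)$, i.e.\ the whole cube lies in $\{1/u_M\le 2C_{HM}c'\mu\}$ and supports your trial function at that slightly higher level. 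This is precisely the content of the paper's Lemma~\ref{lm:comparing}; without it your lower bound stops at a cube count that does not control $\caV_M$, and the ``joint tuning'' you allude to at the end does not by itself supply this step.
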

\noindent Note that the constants $c_{0,d}, C_{0,d}$ depend on $M$ only via $C_{HM}$. 
\begin{rem}
	\begin{enumerate}
		\item The assumptions on the solution $u_M$ should be understood as implicit assumptions on the potential $V$. There are two large classes for which they can be proved to hold: $(i)$ When $V$ is Kato-class (see Theorem \ref{thm:Kato} below), which is the case for the physically relevant case of atomic potentials, $(ii)$ When $V$ is a potential that is bounded below and satisfies a scale-invariant Kato inequality (see \cite{bachmann2023counting}); this includes in particular all polynomials which are bounded from below. In both cases it is key to have suitable kernel estimates for the corresponding Green's function. For non-negative potentials the existence of weak solutions in $W_\mathrm{loc}^{1,2}(\mathbb{R}^d)\cap L_\mathrm{loc}^2(\mathbb{R}^d)$ of \eqref{eq:weak form} (without requiring \eqref{Harnack Moser inequality}) has been studied extensively in \cite{poggi2021applications}.
		\item We do not claim that the solution satisfying the requirement of Theorem~\ref{thm:main} is unique. While we are not aware of a potential admitting multiple such solutions, the conclusion would hold for any solution of this type. Consider for example $V=0$ and $d=1$. In this case any $M>0$ is admissible. All solutions of \eqref{eq:weak form} are of the form $M^{-1}+c_1 e^x+c_2 e^{-x}$, however, only the constant solution satisfies \eqref{Harnack Moser inequality}. \newline
		Both for Kato class potentials and the potentials considered in \cite{bachmann2023counting}, one can show that 
		\begin{equation} \label{uM as integral of Greens function}
			u_M(x) = \int_{\mathbb{R}^d} G_{M}(x,y) dy
		\end{equation}
		is an admissible solution where $G_M$ is the Green's function associated to $-\Delta+V+M$ (for the constant potential just discussed, the explicit form of the heat kernel immediately yields that the constant solution coincides with the function given by \eqref{uM as integral of Greens function}). For $V\geq 0$ the solution in \eqref{uM as integral of Greens function} can be realized as pointwise limit of solutions of the landscape equation on balls of increasing size with Dirichlet boundary conditions (see \cite[Theorem 1.18]{poggi2021applications}). Both for Kato-class potentials and in the setting of \cite{bachmann2023counting}, we prefer to realize an admissible solution as pointwise limit of solutions where the constant function on the RHS of \eqref{eq:weak form} is truncated, namely $u_M(x) = \lim_{R\rightarrow \infty} (-\Delta+V+M)^{-1} \mathbbm{1}_{B(0,R)}$.		
		\newline
		Another example where the solution of the landscape equation can be handled completely is that of the Coulomb potential in $d=3$. Since $-\Delta-\vert x \vert^{-1}+M$ commutes with rotation, the function $(-\Delta-\vert x\vert^{-1}+M)^{-1} \mathbbm{1}_{B(0,R)}$ is radially symmetric and so is $u_M$ in \eqref{uM as integral of Greens function}. In this case, the two linearly independent radial solutions of the homogeneous equations have a singularity, at zero and infinity respectively, and the landscape equation has a unique bounded and radial solution.
		\item While Theorem \ref{thm:main} is reminiscent of the landscape law established in \cite{david2021landscape}, there are three essential differences which are all not only natural but also desirable for the relevant quantum mechanical applications. First of all, we are working in infinite volume, where even the existence of the landscape function is a priori unclear. Secondly, we work under the more general assumption that the operator is bounded below rather than the potential being bounded below. Finally, our estimates are in terms of the measure of the sublevel sets rather than a box-counting measure. The variational argument of Section \ref{sect:CLR} are inspired by the one in \cite{david2021landscape} although we need to modify it to accommodate potentials with a possibly singular negative part. Furthermore, \cite{david2021landscape} requires the absolute value squared of the landscape function to be doubling, rather than the Moser-Harnack inequality \eqref{Harnack Moser inequality}. This weaker assumption is sufficient there as the positivity of the potential implies $-\Delta u\leq 1$, which in turn yields a subsolution estimate, namely \eqref{Harnack Moser inequality} with $\inf u$ replaced by an $L^2$ average. One could weaken \eqref{Harnack Moser inequality} to the averaged version of \cite{david2021landscape} (yielding the lower bond on $\caN^{V+M}$) and a suitable weighted Poincaré inequality (for the upper bound). \newline
		While the full power of \eqref{Harnack Moser inequality} is indeed not required to obtain two-sided estimates of $\caN^{V+M}$ in terms of the coarse-grained volumes, it is important for us as it allows to pass from the coarse-grained volume to the true measure of the sublevel set. On the one hand, this implies the Lieb-Thirring bounds. On the other hand, the fact that no coarse graining is needed emphasises the role of the transformation $V\mapsto \frac{1}{u_M}$ as a smoothing at the relevant scale. In fact, the validity of estimates in terms of the measure of the sublevel sets was conjectured in~\cite{david2021landscape}.
		\newline
		Finally, we point out that the existence of a landscape function for nonnegative potentials has recently been established in a probabilistic setting \cite{david2023landscape}. 
	\end{enumerate}
\end{rem}
		
		This bound in terms of the volume of the sublevel sets implies the following Lieb-Thirring bounds (see \cite{read2023lieb} for a study on how shifting affects the Lieb-Thirring inequalities in a similar setting).
		
\begin{cor} \label{cor:LiebThirring}
Under the assumptions of Theorem~\ref{thm:main} and if $E_0<0$, then
\begin{equation} \label{Lieb-Thirring lower}
	 c_{\gamma,d} \int_{\mathbb{R}^d} \left(\vert E_0\vert+\delta-\frac{1}{c_{0,d}u_M(x)}\right)_{+}^{\gamma+\frac{d}{2}} dx
	 \leq \mathrm{tr}((-\Delta+V)_-^\gamma)
\end{equation}
for all $\gamma>0$, and  if $\gamma\geq 1$,
\begin{equation} \label{Lieb-Thirring upper}
		\mathrm{tr}((-\Delta+V)_-^\gamma) \leq C_{\gamma, d}  \int_{\mathbb{R}^d} \left(\vert E_0\vert +2\delta-\frac{1}{C_{0,d}u_M(x)}\right)_{+}^{\gamma+\frac{d}{2}} dx.
\end{equation}
Here, $\delta =M-\vert E_0\vert$ and
\begin{equation*}
c_{\gamma,d} = \frac{c_{0,d}^{d/2}\gamma}{\frac{d}{2}+\gamma} \min\left\{1, \frac{\delta}{\vert E_0\vert}\right\}^\frac{d}{2} ,\qquad 
C_{\gamma,d} = \frac{C_{0,d}^\frac{d}{2}\gamma}{\frac{d}{2}+\gamma} \left(1+\frac{\vert E_0\vert}{\delta}\right)^\frac{d}{2} ,
\end{equation*}
where $c_{0,d},C_{0,d}$ are the constants of Theorem~\ref{thm:main} for the potential $V+M$. 
\end{cor}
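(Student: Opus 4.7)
The plan is to combine the layer-cake representation of the trace with the counting-function bounds of Theorem~\ref{thm:main} and reduce everything to a one-dimensional integral that can be controlled by an elementary algebraic inequality. Since $\mathrm{spec}(-\Delta+V)\subset[E_0,\infty)$, the eigenvalues of $-\Delta+V+M$ are $\mu_n=\lambda_n+M\geq\delta$, and the layer-cake formula gives
\begin{equation*}
\mathrm{tr}\bigl((-\Delta+V)_-^\gamma\bigr) = \gamma\int_0^{|E_0|} t^{\gamma-1}\,\mathcal{N}^{V+M}(M-t)\,dt.
\end{equation*}
Inserting either bound of Theorem~\ref{thm:main}, applying Fubini, and substituting $u=M-t$ produces
\begin{equation*}
\gamma\,c_{0,d}^{d/2}\!\int_{\mathbb{R}^d}\!I(b_c(x))\,dx \leq \mathrm{tr}\bigl((-\Delta+V)_-^\gamma\bigr) \leq \gamma\,C_{0,d}^{d/2}\!\int_{\mathbb{R}^d}\!I(b_C(x))\,dx,
\end{equation*}
where $b_c=1/(c_{0,d}u_M)$, $b_C=1/(C_{0,d}u_M)$, and $I(b):=\int_b^M(M-u)^{\gamma-1}u^{d/2}\,du$; on the upper side the inner $u$-range is actually $[\max(\delta,b_C),M]$, but enlarging to $[b_C,M]$ only weakens the estimate. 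The whole task reduces to estimating $I(b)$ on each side.

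The algebraic tool is the inequality
\begin{equation*}
x^\gamma y^{d/2} \leq \frac{\gamma}{\gamma+d/2}\,(x+y)^{\gamma+d/2}, \qquad x,y\geq 0,\;\gamma\geq 1,
\end{equation*}
which follows from $f(x):=(x+y)^{\gamma+d/2}-\tfrac{\gamma+d/2}{\gamma}x^\gamma y^{d/2}$ having $f(0)=y^{\gamma+d/2}\geq 0$ and $f'(x)\geq 0$, the latter being $(x+y)^{\gamma+d/2-1}\geq x^{\gamma-1}y^{d/2}$ by the standard $(x+y)^{p+q}\geq x^p y^q$ for $p,q\geq 0$. For \eqref{Lieb-Thirring upper}, I would bound $u^{d/2}\leq M^{d/2}$ inside $I(b_C)$ to get $I(b_C)\leq M^{d/2}(M-b_C)_+^\gamma/\gamma$, then write $M^{d/2}=(M/\delta)^{d/2}\delta^{d/2}$ and apply the inequality with $x=M-b_C,\,y=\delta$ to obtain $I(b_C)\leq\frac{(M/\delta)^{d/2}}{\gamma+d/2}(M+\delta-b_C)_+^{\gamma+d/2}$; since $M/\delta=1+|E_0|/\delta$ and $M+\delta=|E_0|+2\delta$, this produces exactly the constant $C_{\gamma,d}$.

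For \eqref{Lieb-Thirring lower}, the key (and easily overlooked) step is to extract from Theorem~\ref{thm:main} itself the pointwise information $b_c\geq\delta$ almost everywhere: for $\mu\in(0,\delta)$ one has $\mathcal{N}^{V+M}(\mu)=0$ because $-\Delta+V+M\geq\delta$, so the lower inequality forces $(c_{0,d}\mu)^{d/2}\mathcal{V}_M(c_{0,d}\mu)=0$, hence $\mathcal{V}_M(c_{0,d}\mu)=0$ (as $\mu>0$), and a monotone limit yields $1/u_M\geq c_{0,d}\delta$ a.e. With this in hand, $u^{d/2}\geq b_c^{d/2}$ on $[b_c,M]$ gives $I(b_c)\geq b_c^{d/2}(M-b_c)_+^\gamma/\gamma$. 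The constraints $b_c\geq\delta$ and $(M-b_c)_+\leq|E_0|$ imply $b_c\geq(\delta/|E_0|)(M-b_c)_+$, while when $\delta\geq|E_0|$ one also has $b_c\geq\delta\geq M/2\geq(M-b_c)_+$; in all cases $b_c\geq\min\{1,\delta/|E_0|\}\,(M-b_c)_+$, so raising to power $d/2$ and using $1/\gamma\geq 1/(\gamma+d/2)$ produces $I(b_c)\geq\frac{\min\{1,\delta/|E_0|\}^{d/2}}{\gamma+d/2}(M-b_c)_+^{\gamma+d/2}$, matching $c_{\gamma,d}$. The hard part is precisely this derivation of $b_c\geq\delta$; without it the lower bound would fail near points where $b_c$ is small.
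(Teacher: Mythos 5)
Your proposal is correct and reaches the paper's exact constants, and its skeleton is the same as the paper's: the sum-to-integral identity (Lemma~\ref{lm: sum to integral}, which you invoke as the layer-cake formula), insertion of the two-sided bound \eqref{CLR} for $\mathcal{N}^{V+M}(M-t)$, and a conversion back to an $x$-integral of the $(\gamma+\tfrac{d}{2})$-power. Where you differ is in the middle step: the paper stays with the one-dimensional $\lambda$-integral, splits $[0,|E_0|]$ at $|E_0|/2$ to compare $|E_0|-\lambda+\delta$ with $\lambda+\delta$ (upper) or $\lambda$ (lower), uses $\lambda^{\gamma-1}\leq(\lambda+\delta)^{\gamma-1}$ for $\gamma\geq 1$, and only then applies layer cake; you instead apply Fubini first, reduce to the pointwise quantity $I(b)=\int_b^M(M-u)^{\gamma-1}u^{d/2}\,du$, and absorb the shift via the elementary inequality $x^\gamma y^{d/2}\leq\frac{\gamma}{\gamma+d/2}(x+y)^{\gamma+d/2}$ (where $\gamma\geq1$ enters) and the comparison $b_c\geq\min\{1,\delta/|E_0|\}(M-b_c)_+$. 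The ``hidden'' fact you isolate, $b_c\geq\delta$ a.e., is genuinely needed and your derivation from the lower bound in \eqref{CLR} is valid; the paper uses the very same fact, phrased as the vanishing of the sublevel-set measure for $\lambda>|E_0|$, at the end of its lower-bound computation, so you have made explicit what the paper leaves as a one-line remark. One small point of care in your upper bound: on the (possibly nonempty) set where $b_C>M$ the identification of $\left((M-b_C)_++\delta\right)^{\gamma+d/2}$ with $\left(|E_0|+2\delta-\frac{1}{C_{0,d}u_M}\right)_+^{\gamma+d/2}$ fails, but that region contributes nothing to the inner integral (read $I(b)=0$ for $b\geq M$), so restricting to $\{b_C\leq M\}$ before the last step and then enlarging the $x$-domain closes this harmlessly; it is a bookkeeping convention, not a gap.
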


The Harnack-Moser inequality \eqref{Harnack Moser inequality} controls the oscillations of $u_M$ at all scales and is the natural choice in the present context. It originates in the PDE literature in slightly different settings. On finite domains with $V\equiv 0$ this inequality holds always true for nonnegative solutions of \eqref{eq:weak form}, see \cite[Theorem 3.3]{han2011elliptic}). On the other hand, for Kato-class potentials (see \cite[Theorem 2.5]{chiarenza1986harnack}) it is known that nonnegative solutions of the homogeneous equation $(-\Delta +V)u=0$ satisfy a Harnack inequality for balls with small radius, namely $\sup_{B(x,r)} u \leq C_H \inf_{B(x,r)} u$. The key difficulty to overcome here is the lack of compactness as we are working in $\mathbb{R}^d$ and need the inequality to hold for arbitrary large cubes.
	
The assumptions of the above theorem, and in particular~\eqref{Harnack Moser inequality} can be verified for the physically relevant potentials in the Kato-class $\caK_d$ (see Section~\ref{sect:existence} for a precise definition). This includes all bounded potentials as well as power laws singularities $\vert x \vert^{-\rho}$ for $\rho \in [0,\min \{d,2\})$. A `crystal' of Coulomb singularities is allowed too.
		
\begin{prop} \label{prop:existence}
		Let $V$ be such that $V_+\in \mathcal{K}_{d}^\mathrm{loc}, V_-\in \mathcal{K}_d$. Then $E_0= \inf \mathrm{spec}(-\Delta +V)>-\infty$. For all $M>- E_0$ there exists a positive solution $u_M$ of \eqref{eq:weak form} such that $u_M\in L^\infty(\mathbb{R}^d) \cap C^0(\mathbb{R}^d)\cap H_\mathrm{loc}^1(\mathbb{R}^d)$ and $\frac{1}{u_M} \in L^\infty_\mathrm{loc}(\mathbb{R}^d)$. If, additionally, $V_+\in \mathcal{K}_d$, then $u_M^{-1}\in L^\infty(\mathbb{R}^d)$ and $u_M$ satisfies \eqref{Harnack Moser inequality}. In fact, 
\begin{equation*}
0<A_M=\frac{\sup_{\mathbb{R}^d} u_M}{\inf_{\mathbb{R}^d} u_M}<\infty.
\end{equation*}
\end{prop}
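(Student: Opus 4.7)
My plan is to handle the three claims of the proposition in sequence. For the semi-boundedness $E_0>-\infty$, I would invoke the KLMN theorem: since $V_-\in\mathcal{K}_d$ is infinitesimally form-bounded with respect to $-\Delta$ (a standard property of the Kato class, see \cite{AizenmanSimon}) and $V_+\ge 0$ contributes nonnegatively to the form, the quadratic form associated to $-\Delta+V$ is closed and semi-bounded below on $C_c^\infty(\mathbb{R}^d)$.

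For the existence of $u_M$, I would proceed by exhausting $\mathbb{R}^d$ with balls. Let $u_M^{(R)}\in H_0^1(B_R)$ be the nonnegative weak solution of $(-\Delta+V+M)u_M^{(R)}=1$ on $B_R$ with zero Dirichlet boundary data; this exists by Lax--Milgram applied to the form $\int_{B_R}|\nabla u|^2+(V+M)|u|^2$ (coercive for $M>-E_0\ge -E_0^{(R)}$) and is nonnegative by the positivity-preserving property of the Dirichlet resolvent. The maximum principle on nested balls gives pointwise monotonicity $R\mapsto u_M^{(R)}(x)$. The key a priori estimate is
\begin{equation*}
\|u_M^{(R)}\|_\infty \le \frac{C}{M+E_0^{(R)}},
\end{equation*}
with $E_0^{(R)}\downarrow E_0$, which I would derive from the Feynman--Kac representation $u_M^{(R)}(x)=\mathbb{E}_x\bigl[\int_0^{\tau_R}e^{-\int_0^t(V+M)(B_s)\,ds}\,dt\bigr]$ together with the uniform bound $\|e^{-t(-\Delta+V)}\|_{\infty\to\infty}\le Ce^{-tE_0}$, a standard consequence of $V_-\in\mathcal{K}_d$ via Khasminskii's lemma combined with ultracontractivity of the Schr\"odinger semigroup. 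Monotone convergence then yields $u_M\in L^\infty(\mathbb{R}^d)$, positive by construction; passing to the limit on compactly supported test functions (dominated convergence for the potential term, standard elliptic estimates for weak $H^1_{\mathrm{loc}}$ compactness on the gradient term) produces \eqref{eq:weak form}. Continuity and local regularity $u_M\in H^1_{\mathrm{loc}}\cap L^\infty_{\mathrm{loc}}$ are standard facts for Schr\"odinger operators with Kato potentials \cite{AizenmanSimon,chiarenza1986harnack}, and $u_M^{-1}\in L^\infty_{\mathrm{loc}}$ is automatic from continuity and strict positivity.

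Under the stronger hypothesis $V_+\in\mathcal{K}_d$, I would extract a uniform lower bound via Feynman--Kac once more. Using $e^{-\int V\,ds}\ge e^{-\int V_+\,ds}$ and Jensen's inequality,
\begin{equation*}
u_M(x)\ge \int_0^{t_0} e^{-Mt}\exp\!\Bigl(-\sup_y\mathbb{E}_y\Bigl[\int_0^t V_+(B_s)\,ds\Bigr]\Bigr)\,dt,
\end{equation*}
and the Kato condition on $V_+$ forces the inner supremum to vanish as $t\to 0^+$, producing $\inf_{\mathbb{R}^d}u_M\ge c>0$ after choosing $t_0$ small. Combined with the global upper bound of the previous step, this gives $A_M\in(0,\infty)$ and $u_M^{-1}\in L^\infty(\mathbb{R}^d)$. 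The Harnack--Moser inequality then follows trivially: for any cube $Q$,
\begin{equation*}
\sup_Q u_M \le \sup_{\mathbb{R}^d} u_M \le A_M \inf_{\mathbb{R}^d} u_M \le A_M \inf_Q u_M \le A_M\bigl(\inf_Q u_M+\ell(Q)^2\bigr),
\end{equation*}
so one may take $C_{HM}=A_M$.

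The main obstacle is establishing the sharp $L^\infty$ bound on $u_M^{(R)}$ with the correct decay rate $(M+E_0)^{-1}$: one must verify that the Schr\"odinger semigroup extends consistently from $L^2$ to $L^\infty$ with spectral radius matching the $L^2$ bottom $E_0$, which is precisely where the global Kato hypothesis on $V_-$ is used in an essential way. Conveniently, under the strengthened hypothesis of the third step the usual nontrivial Harnack--Moser estimate is bypassed altogether by the global two-sided control on $u_M$, so the key analytical work is concentrated in the a priori $L^\infty$ bound on the finite-volume approximations.
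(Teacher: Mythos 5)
Your proposal is correct in outline, but it takes a genuinely different route from the paper. The paper constructs $u_M$ directly on the whole space as $u_M(x)=\int_{\mathbb{R}^d}G_M(x,y)\,dy$ using the Green's-function estimates of \cite[Theorem B.7.2]{simon1982schrodinger} (stated for $d\geq 3$), verifies the weak formulation through the truncations $u_{M,L}=(-\Delta+V+M)^{-1}\mathbbm{1}_{B(0,L)}$, obtains $1/u_M\in L^\infty_{\mathrm{loc}}$ from the pointwise lower bound on $G_M$ combined with positivity preservation of the resolvent, gets continuity by rewriting the equation as the homogeneous one $(-\Delta+V+M+1/u_M)u_M=0$ and invoking \cite{AizenmanSimon}, handles $d\in\{1,2\}$ by Hadamard descent, and derives the uniform lower bound under $V_+\in\caK_d$ from the $R$-independence of the Green's-function lower bound. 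You instead exhaust by Dirichlet problems on balls and control everything probabilistically: the global $L^\infty$ bound via the growth bound of the Schr\"odinger semigroup on $L^\infty$, and the uniform lower bound via Jensen plus the probabilistic characterization of the Kato class. Your route works in all dimensions at once (no descent argument) and the lower bound argument is particularly clean; the paper's route yields $1/u_M\in L^\infty_{\mathrm{loc}}$ without any global hypothesis on $V_+$ straight from the Green's function, whereas you deduce it from continuity plus strict positivity. Three points you should tighten, none of them fatal: (i) the bound $\Vert e^{-t(-\Delta+V)}\Vert_{\infty\to\infty}\leq Ce^{-tE_0}$ is slightly optimistic --- what the $p$-independence of the semigroup growth bound for Kato-class potentials gives is $C_\epsilon e^{-(E_0-\epsilon)t}$ for every $\epsilon>0$, which is all you need since $M>-E_0$; (ii) domain monotonicity of $u_M^{(R)}$ via the weak maximum principle requires the generalized version (positive first Dirichlet eigenvalue, as $V+M$ need not be nonnegative), or can be read off directly from your Feynman--Kac formula since $\tau_R$ increases with $R$; (iii) ``positive by construction'' only gives nonnegativity of the limit --- strict pointwise positivity should be drawn from the Feynman--Kac representation (or a Harnack inequality), and continuity of $u_M$ for the inhomogeneous equation deserves a line of its own, e.g.\ via $u_M=\int_0^\infty e^{-Mt}e^{-t(-\Delta+V)}1\,dt$ together with the fact that the semigroup maps $L^\infty$ into bounded continuous functions; this is precisely the step where the paper instead uses the $1/u_M$ rewriting trick, which in turn forces it to establish $1/u_M\in L^\infty_{\mathrm{loc}}$ before continuity.
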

\noindent The fact that, with a Kato condition, the operator $-\Delta + V$ is bounded from below is well-known, see~\cite[Theorem A.2.7.]{simon1982schrodinger}). Note that the very last claim of the theorem is a global Harnack inequality which implies~\eqref{Harnack Moser inequality}, although for a potentially suboptimal constant since $C_{HM}\leq A_M$.

The bound~(\ref{CLR}), while very general indeed, is not optimal as the spectral parameter is shifted by $M$. In other words, a sharp estimate would use $\frac{1}{u_M} - M$ instead of $\frac{1}{u_M}$ as the effective potential. The following theorem realizes this improvement for Kato-class potentials. Specifically, we consider the unshifted counting function $\mathcal{N}^V(\mu)$ and show (i)~that a lower bound in terms of coarsed-grained volume of the sublevel sets of $\frac{1}{u_M} - M$ always holds, (ii)~that the coarse-graining can be removed if $\frac{1}{u_M} - M$ satisfies a scale-invariant Harnack inequality and (iii)~that a CLR-type upper bound, with $V$ replaced by $\frac{1}{u_M} - M$, holds in dimensions $3$ and higher.
		
		\begin{theorem} \label{thm:Kato}
			Let $V\in \mathcal{K}_d$ and $M>- E_0$. Let $u_M$ be the function given by Proposition~\ref{prop:existence}.
			\begin{enumerate}[label=\roman*)]
				\item Let $\mu <0$. For any $c>1$, let
				\begin{equation} \label{def tilde n}
					n_c(\mu) = \left\vert \left\{ Q\in \mathcal{Q}_{(C_c\vert \mu\vert)^{-1/2}} \ : \ \sup_Q \left(\frac{1}{u_M} - M \right) \leq c\mu  \right\} \right\vert,\quad C_c = \frac{c-1}{2^d(5 A_M)^2},
				\end{equation}
where $\caQ_\ell$ is a partition of $\mathbb{R}^d$ into cubes of side length $\ell$, see Section \ref{sect:CLR}. Then
				\begin{equation}
					\mathcal{N}^V(\mu)\geq n_c(\mu).
				\end{equation}				
				\item Let $c>1$. Assume there exists $\widetilde{C}_H>0$ such that for all $\ell>0$ and all $Q\in \mathcal{Q}_{\ell}$ with $\sup_Q (\frac{1}{u_M}-M)\leq -\frac{c}{C_c\ell^2}$,
				\begin{equation} \label{scale invariant Harnack}
					\sup_Q \left( \frac{1}{u_M} -M \right) \leq \widetilde C_H\inf_Q \left(\frac{1}{u_M}-M\right).
				\end{equation} 
				Then
				\begin{equation}\label{Lower improved Kato}
					\mathcal{N}^V(\mu)\geq (C_c\vert \mu\vert)^{d/2} \left\vert \left\{ x\in \mathbb{R}^d \ : \ \frac{1}{u_M(x)}-M \leq \frac{c\mu}{\widetilde{C}_H} \right\} \right\vert
				\end{equation}
				 for all $\mu < 0$.
				
				\item Let $d\geq 3$. There exist a constant $C_\mathrm{CLR}>0$ depending only on $d$ such that
				\begin{equation} \label{CLR improved}
					\mathcal{N}^V(\mu) \leq C_\mathrm{CLR} A_M^d \int_{\mathbb{R}^d} \left( (\mu+\varepsilon)- \left(\frac{1}{u_M(x)}-M\right)\right)_+^{d/2} dx,
				\end{equation}
				 for all $\mu\in \mathbb{R}$ and all $\varepsilon>0$.
			\end{enumerate} 
		\end{theorem}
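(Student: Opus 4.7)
The engine for all three parts is a ground-state substitution. Testing \eqref{eq:weak form} against $u_M\varphi^2$ for a real $\varphi\in C_c^\infty(\bbR^d)$ and expanding $|\nabla(u_M\varphi)|^2$ yields the identity
\begin{equation*}
\langle u_M\varphi,(-\Delta+V)(u_M\varphi)\rangle = -M\|u_M\varphi\|_2^2 + \int u_M\varphi^2\,dx + \int u_M^2|\nabla\varphi|^2\,dx.
\end{equation*}
The effective potential $W:=1/u_M-M$ enters by rewriting $\int u_M\varphi^2=\int(1/u_M)(u_M\varphi)^2$, and the regularity of $u_M$ from Proposition~\ref{prop:existence} ($u_M\in C^0\cap H^1_{\mathrm{loc}}$ with $u_M,1/u_M\in L^\infty$) makes the substitution legitimate after a density argument.

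For part~(i), I would fix $\mu<0$, set $\ell=(C_c|\mu|)^{-1/2}$, and for each cube $Q\in\caQ_\ell$ counted in $n_c(\mu)$ choose a bump $\varphi_Q$ supported in the interior of $Q$ whose Poincar\'e ratio satisfies $\int|\nabla\varphi_Q|^2/\int\varphi_Q^2\leq K_d/\ell^2$; the first Dirichlet eigenfunction $\prod_i\sin(\pi(x_i-a_i)/\ell)$ gives $K_d=d\pi^2$. Combining the identity above with $\sup_Q(1/u_M)\leq M+c\mu$ and the global bound $\sup u_M/\inf u_M\leq A_M$ from Proposition~\ref{prop:existence} shows that $\psi_Q=u_M\varphi_Q$ satisfies
\begin{equation*}
\frac{\langle\psi_Q,(-\Delta+V)\psi_Q\rangle}{\|\psi_Q\|_2^2}\leq c\mu+\frac{K_dA_M^2}{\ell^2}.
\end{equation*}
The calibration $C_c=(c-1)/(2^d(5A_M)^2)$ is set precisely so that the second term is $\leq(c-1)|\mu|$ whenever $K_d\leq 25\cdot 2^d$, which holds for the chosen $\varphi_Q$; hence the ratio is $\leq\mu$. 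Since cubes in $\caQ_\ell$ are disjoint, the $\psi_Q$ are pairwise $L^2$-orthogonal and the form is block-diagonal on their span, so the min-max principle delivers $\caN^V(\mu)\geq n_c(\mu)$.

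Part~(ii) then follows by feeding \eqref{scale invariant Harnack} into part~(i). If a cube $Q\in\caQ_\ell$ meets the sublevel set $S:=\{W\leq c\mu/\widetilde C_H\}$ at some $x_0$, then $\inf_Q W\leq W(x_0)\leq c\mu/\widetilde C_H$, and the Harnack assumption forces $\sup_Q W\leq\widetilde C_H\inf_Q W\leq c\mu$, so $Q$ is counted in $n_c(\mu)$. Since $\caQ_\ell$ partitions $\bbR^d$ into cubes of volume $\ell^d=(C_c|\mu|)^{-d/2}$, at least $|S|/\ell^d=(C_c|\mu|)^{d/2}|S|$ cubes meet $S$, and combining with part~(i) yields \eqref{Lower improved Kato}.

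For part~(iii), I would reduce to the classical CLR inequality via the unitary equivalence $\caI:\Phi\mapsto u_M\Phi$ from $L^2(u_M^2\,dx)$ to $L^2(dx)$, which conjugates $H=-\Delta+V$ to $\widetilde H=L+W$ with $L=-u_M^{-2}\nabla\cdot(u_M^2\nabla)\geq 0$. On any $N$-dimensional subspace with $\widetilde H\leq\mu$ the form inequality reads $\int u_M^2|\nabla\Phi|^2\leq\int u_M^2(\mu-W)_+\Phi^2$, and the pinching $(\inf u_M)^2\leq u_M^2\leq A_M^2(\inf u_M)^2$ converts this into $\int|\nabla\Phi|^2\leq A_M^2\int(\mu-W)_+\Phi^2$. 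The min-max principle then gives $\caN^V(\mu)\leq\caN^{-\Delta-A_M^2(\mu-W)_+}(0)$, which the classical CLR inequality ($d\geq 3$) bounds by $C_{\mathrm{CLR}}A_M^d\int(\mu-W)_+^{d/2}dx\leq C_{\mathrm{CLR}}A_M^d\int(\mu+\varepsilon-W)_+^{d/2}dx$ for any $\varepsilon>0$. The main obstacle I anticipate is the tension in part~(ii) between the hypothesis of \eqref{scale invariant Harnack} ($\sup_Q W\leq c\mu$) and the very bound one wishes to conclude; since the Harnack is posited as an assumption over exactly the relevant family of cubes and $W$ is continuous, this is absorbed into the hypothesis and the logical loop closes without further verification.
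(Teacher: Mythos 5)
Your proposal is correct and follows essentially the same route as the paper: part~(i) is the paper's argument (the ground-state substitution identity \eqref{conjugation} plus disjointly supported trial functions $u_M\times\mathrm{bump}$ and min--max), with Dirichlet-eigenfunction bumps in place of the plateau cutoffs $\chi_Q$ but the same calibration $C_c=\frac{c-1}{2^d(5A_M)^2}$; part~(ii) is the same covering argument as in Lemma~\ref{lm:comparing}; and part~(iii) is the same conjugation-plus-pinching reduction to the classical CLR bound, merely phrased via the spectral subspace of the conjugated operator rather than the paper's codimension-$N$ subspace with the $\varepsilon$-strictness. The circularity you flag in~(ii) --- invoking \eqref{scale invariant Harnack} on cubes for which only $\inf_Q\left(\tfrac{1}{u_M}-M\right)\leq \tfrac{c\mu}{\widetilde C_H}$ has been verified, while its stated hypothesis is in terms of $\sup_Q$ --- is exactly how the paper's own one-line argument proceeds, so your treatment mirrors, and is no weaker than, the published proof.
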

		
Here again, these bounds yield upper and lower Lieb-Thirring inequalities. The difference with the bounds of Corollary~\ref{cor:LiebThirring} cannot be understated: Here, the effective potential appears as it really should, without an additional scaling and shift. 

\begin{cor} \label{cor:Kato LiebThirring}
Under the assumptions of Theorem~\ref{thm:Kato}(ii), respectively~(iii),
\begin{equation} \label{Kato Lieb-Thirring lower}
	 k_{\gamma,d}\int_{\mathbb{R}^d} \left(\frac{1}{u_M(x)}-M\right)_{-}^{\gamma+\frac{d}{2}} dx
	 \leq \mathrm{tr}((-\Delta+V)_-^\gamma)
\end{equation}
and
\begin{equation} \label{Kato Lieb-Thirring upper}
		\mathrm{tr}((-\Delta+V)_-^\gamma) \leq L_{\gamma, d} A_M^{2\gamma +d}  \int_{\mathbb{R}^d} \left(\frac{1}{u_M(x)}-M\right)_{-}^{\gamma+\frac{d}{2}} dx,
\end{equation}
for all $\gamma>0$ where $L_{\gamma,d}$ are the standard Lieb-Thirring constants and $$k_{\gamma,d} = C_c^\frac{d}{2}\frac{\gamma}{\gamma + \frac{d}{2}}\left(\frac{\widetilde C_H}{c}\right)^{\gamma + \frac{d}{2}}.$$
\end{cor}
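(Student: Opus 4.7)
The plan is to reduce both inequalities to the counting-function bounds of Theorem~\ref{thm:Kato} via the standard Aizenman-Lieb layer-cake identity
\begin{equation*}
\mathrm{tr}\bigl((-\Delta+V)_-^\gamma\bigr) = \gamma \int_0^\infty \tau^{\gamma-1} \mathcal{N}^V(-\tau) \, d\tau,
\end{equation*}
and then to rescale $\tau$ so that the spatial integrand becomes a power of $W_-$, where $W := \frac{1}{u_M}-M$, which can be re-identified with $\int W_-^{\gamma+d/2}\,dx$ through a second (inverse) layer-cake identity.

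For the lower bound~(\ref{Kato Lieb-Thirring lower}), I would substitute~(\ref{Lower improved Kato}) applied at $\mu=-\tau$ into the identity above, exchange the $\tau$- and $x$-integrals by Tonelli (justified since all integrands are nonnegative), and change variables $s = c\tau/\widetilde C_H$. The resulting $\tau$-integral $\int_0^\infty s^{\gamma+d/2-1} |\{W_- \geq s\}|\, ds$ equals $(\gamma+\tfrac{d}{2})^{-1} \int W_-^{\gamma+d/2}\,dx$ by layer cake, and the remaining prefactors collect into exactly $k_{\gamma,d}$.

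The upper bound~(\ref{Kato Lieb-Thirring upper}) is more delicate because~(\ref{CLR improved}) carries a free parameter $\varepsilon>0$. Merely letting $\varepsilon \to 0$ produces a trace bound with prefactor $A_M^d$, which is off by $A_M^{2\gamma}$ from the claim. The key step is to tune $\varepsilon$ to $\tau$: I would substitute $\varepsilon := (1-A_M^{-2})\tau$ in~(\ref{CLR improved}) at $\mu=-\tau$, which rewrites the bound as
\begin{equation*}
\mathcal{N}^V(-\tau) \leq C_\mathrm{CLR} A_M^d \int \bigl(W_-(x) - A_M^{-2}\tau\bigr)_+^{d/2}\,dx.
\end{equation*}
Substituting this into the Aizenman-Lieb identity, applying Tonelli, and rescaling $s = A_M^{-2}\tau$ produces a Jacobian $A_M^{2\gamma}$ which combines with the pre-existing $A_M^d$ to yield the claimed $A_M^{2\gamma+d}$. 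The remaining $s$-integral reduces to the Beta identity $\int_0^a s^{\gamma-1}(a-s)^{d/2}\,ds = a^{\gamma+d/2} B(\gamma, d/2+1)$, and the numerical prefactor $\gamma B(\gamma, d/2+1) C_\mathrm{CLR}$ is the standard Lieb-Thirring constant $L_{\gamma,d}$ obtained from the CLR constant by the classical Aizenman-Lieb computation.

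The main obstacle I anticipate is spotting the right $\tau$-dependent choice of $\varepsilon$ in the upper bound, without which the factor $A_M^{2\gamma}$ would be missing. Everything else is routine: Tonelli applies unconditionally because the integrands are nonnegative, $A_M<\infty$ is guaranteed by Proposition~\ref{prop:existence}, the hypothesis $\gamma\geq 1$ is the classical range of validity for standard LT, and the bound is vacuously true whenever the right-hand side is infinite.
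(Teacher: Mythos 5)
Your lower bound is essentially the paper's own argument: plug \eqref{Lower improved Kato} into the trace identity of Lemma~\ref{lm: sum to integral}, use Tonelli and the layer-cake formula, and the constants assemble into $k_{\gamma,d}$ (the paper restricts the $\lambda$-integral to $[0,|E_0|]$ and notes the sublevel measure vanishes beyond, which your $(0,\infty)$-integration handles automatically since \eqref{Lower improved Kato} holds for all $\mu<0$). For the upper bound you take a genuinely different route. The paper does not integrate \eqref{CLR improved} itself: it re-enters the proof of Theorem~\ref{thm:Kato}(iii) to extract the comparison of counting functions $\caN^V(\mu)\leq \caN^{-A_M^2(1/u_M-M)_-}(\mu+\varepsilon)$, lets $\varepsilon\to 0^+$ using right-continuity, obtains $\mathrm{tr}((-\Delta+V)_-^\gamma)\leq \mathrm{tr}\bigl((-\Delta-A_M^2(1/u_M-M)_-)_-^\gamma\bigr)$, and then applies the standard Lieb--Thirring inequality as a black box; this yields the literal constant $L_{\gamma,d}$, whichever standard (or best known) constant one prefers. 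You instead work only from the stated bound \eqref{CLR improved}, with the nice observation that the $\tau$-dependent choice $\varepsilon=(1-A_M^{-2})\tau$ recovers the missing factor $A_M^{2\gamma}$, and you run the Aizenman--Lieb computation by hand; the calculation is correct (Tonelli, the rescaling $s=A_M^{-2}\tau$, and the Beta identity all check out), but the constant you end up with is $\gamma B(\gamma,\tfrac{d}{2}+1)\,C_{\mathrm{CLR}}$, which is \emph{a} valid Lieb--Thirring constant derived from the CLR constant, not ``the'' standard $L_{\gamma,d}$ if the latter is read as the sharp or best-known constant --- so your version of \eqref{Kato Lieb-Thirring upper} is of the same form but with a possibly larger prefactor. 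Two harmless side remarks: the hypothesis $\gamma\geq 1$ is not forced by the classical LT range (in $d\geq 3$ the standard inequality holds for all $\gamma\geq 0$), and in the degenerate case $A_M=1$ your choice gives $\varepsilon=0$, which \eqref{CLR improved} does not allow verbatim; take $\varepsilon$ slightly positive and pass to the limit.
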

Using the properties of the resolvent (see \cite[Theorem B.2.1.]{simon1982schrodinger}) one knows that if $V\in L^{d/2+\gamma}(\mathbb{R}^d)$, then $u_M^{-1}-M \in L^q(\mathbb{R}^d)$ for all $q\in [d/2+\gamma,\infty)$. Thus, the upper bound in \eqref{Kato Lieb-Thirring upper} remains finite for all $\gamma>0$ in that setting, which is in general not the case for the standard Lieb-Thirring bound.

\subsection{Applications \& Examples}

While the above results are valid throughout the spectrum, their sharpness comes into focus at two critical points: the bottom of the spectrum and the bottom of the essential spectrum.

\subsubsection{On the ground state energy}

The first proposition below shows that the infimum of $\frac{1}{u_M} - M$ is a lower bound for the ground state energy $E_0$, see also~\cite{berg2022efficiency} for nonnegative potentials as well as the previous work \cite{van2009hardy}. This is turn allows us to set up an iterative procedure converging to $E_0$. Finally, we show that the CLR bound using $\frac{1}{u_M} - M$ yields the exact asymptotics of the negative eigenvalues accumulating at $0^-$ for `atomic' potentials $V(x)=-\vert x \vert^{-\rho}$.

		\begin{prop} \label{prop:groundstate}
			Let $V\in \mathcal{K}_d$ and $M>- E_0$. Let $u_M$ be the function given by Proposition~\ref{prop:existence}. We have
			\begin{equation} \label{eq:ground state}
				 E_0 \geq \inf_{\mathbb{R}^d} \left(\frac{1}{u_M}-M\right).
			\end{equation}
		\end{prop}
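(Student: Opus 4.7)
The statement is a Barta-type inequality applied to the landscape function. The idea is that $u_M$ itself serves as a positive supersolution (in fact solution) of $(-\Delta + V + M)u_M = 1 \geq \bigl(\inf \tfrac{1}{u_M}\bigr) u_M$, which should force $E_0 + M \geq \inf u_M^{-1}$.

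My plan is the classical substitution trick. For $f \in C_c^\infty(\mathbb{R}^d)$ write $f = u_M \psi$, where $\psi = f/u_M$. Since $u_M$ and $u_M^{-1}$ are locally bounded and $u_M \in H^1_\mathrm{loc}$, the function $\psi$ lies in $H^1$ with compact support. Expanding the quadratic form gives
\begin{equation*}
\int_{\mathbb{R}^d}\!\bigl(|\nabla f|^2 + (V+M)f^2\bigr)\,dx
= \int_{\mathbb{R}^d}\!\bigl(u_M^2|\nabla\psi|^2 + \psi^2|\nabla u_M|^2 + u_M\nabla u_M\cdot\nabla(\psi^2) + (V+M)u_M^2\psi^2\bigr)\,dx.
\end{equation*}
Now I would use the weak equation \eqref{eq:weak form} with the test function $\varphi = u_M\psi^2$ (justified by a density argument since $u_M\psi^2$ is a compactly supported $H^1$ function, obtained as a limit in $H^1$ of $C_c^\infty$ mollifications), yielding
\begin{equation*}
\int_{\mathbb{R}^d}\!\bigl(\psi^2|\nabla u_M|^2 + u_M\nabla u_M\cdot\nabla(\psi^2) + (V+M)u_M^2\psi^2\bigr)\,dx = \int_{\mathbb{R}^d} u_M\psi^2\,dx.
\end{equation*}

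Subtracting, the three middle terms collapse and I obtain the ground-state representation
\begin{equation*}
\int_{\mathbb{R}^d}\!\bigl(|\nabla f|^2 + (V+M)f^2\bigr)\,dx = \int_{\mathbb{R}^d}\!u_M^2|\nabla\psi|^2\,dx + \int_{\mathbb{R}^d}\frac{f^2}{u_M}\,dx \;\geq\; \Bigl(\inf_{\mathbb{R}^d}\frac{1}{u_M}\Bigr)\int_{\mathbb{R}^d} f^2\,dx.
\end{equation*}
Taking the infimum over $f \in C_c^\infty$ with $\|f\|_2 = 1$ and invoking the variational characterization of the bottom of the spectrum of $-\Delta + V + M$ gives $E_0 + M \geq \inf_{\mathbb{R}^d} u_M^{-1}$, which is \eqref{eq:ground state}.

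The main obstacle is purely technical: justifying that $u_M\psi^2$ may be used as a test function in \eqref{eq:weak form}. Under the hypotheses $u_M \in H^1_\mathrm{loc} \cap L^\infty_\mathrm{loc}$ and $u_M^{-1} \in L^\infty_\mathrm{loc}$ supplied by Proposition~\ref{prop:existence}, the product $u_M \psi^2$ lies in $H^1$ with compact support, and can be approximated by $C_c^\infty$ functions in $H^1$-norm (standard mollification plus a cutoff, which is unnecessary here since the support is already compact). All integrals in the weak formulation are locally bounded, so one can pass to the limit and the identity above extends to this class of test functions.
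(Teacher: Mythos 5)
Your argument is correct, but it takes a different route from the paper. You prove the bound by the ground-state (Jacobi/Barta) substitution $f=u_M\psi$: expanding the quadratic form, using $u_M\psi^2=f^2/u_M$ as a test function in \eqref{eq:weak form}, discarding the nonnegative term $\int u_M^2|\nabla\psi|^2$, and invoking the min-max principle together with the fact that $C_c^\infty(\mathbb{R}^d)$ is a form core for $-\Delta+V+M$ (for Kato-class potentials this is \cite[Theorem A.2.8]{simon1982schrodinger}, which the paper itself cites). This is exactly the identity \eqref{conjugation} already established in the proof of Lemma~\ref{lma:UpperBound}, so your proof amounts to reusing that conjugation formula and dropping the kinetic term; it needs only the local regularity of $u_M$ ($u_M\in H^1_{\mathrm{loc}}\cap L^\infty_{\mathrm{loc}}$, $u_M^{-1}\in L^\infty_{\mathrm{loc}}$) and therefore works under the weaker hypotheses of Theorem~\ref{thm:main}, not just for $V\in\mathcal{K}_d$, and your approximation argument for the test function $f^2/u_M$ is the same one the paper uses there. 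The paper instead argues by exhaustion: it considers $H_L=-\Delta_L+V+M$ on balls $B(0,L)$ with Dirichlet boundary conditions, uses compactness of the resolvent to obtain a positive ground state $\varphi_L$ with eigenvalue $\sigma_L$, pairs it with $v_L=H_L^{-1}\mathbbm{1}_{B(0,L)}$ to get $1\leq\sigma_L\Vert v_L\Vert_{L^\infty}$, compares $v_L\leq u_M$ by the weak maximum principle, and lets $L\to\infty$ so that $\sigma_L\downarrow E_0+M$; this yields $1\leq (E_0+M)\Vert u_M\Vert_{L^\infty}$, i.e.\ the same conclusion. Your route is shorter and avoids the Dirichlet ground states, compact-resolvent input and the maximum principle; the paper's route (following \cite[Lemma 5]{berg2022efficiency}) gives in addition the finite-volume estimates $1\leq\sigma_L\Vert u_M\Vert_{L^\infty}$ for every $L$ and only requires testing against the constant function rather than a conjugated test-function class. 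One small point to state explicitly in your write-up: the inequality is first obtained for all $f\in C_c^\infty(\mathbb{R}^d)$, and the passage to $E_0+M\geq\inf_{\mathbb{R}^d}u_M^{-1}$ uses that $C_c^\infty$ is dense in the form domain in form norm, not merely the abstract variational characterization.
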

It may seem that the proposition is relatively useless without a priori information on the ground state energy. This is erroneous as the result can be bootstrapped to provide a convergent sequence converging to $E_0$. As we shall see in Section~\ref{sect:examples}, the approximation may already be very good even after one iteration, even if $M-E_0$ is very large.
\begin{cor} \label{cor:iteration}
	Let $V\in \mathcal{K}_d$ and let
		\begin{align*}
			F: (-E_0, \infty) \rightarrow (-E_0,\infty), \quad M\mapsto M-\inf_{\mathbb{R}^d} \frac{1}{u_M},
		\end{align*}
	Let $M_0\in (-E_0 , \infty)$ and define recursively $M_{n+1}=F(M_n)$ for $n\in\mathbb{N}$. Then $M_{n+1}\leq M_n$ and $M_n \to -E_0 $ as $n\rightarrow \infty$. Moreover,
			\begin{equation} \label{eq:blowup}
				\lim_{M\rightarrow (-E_0)^+} \Vert u_M \Vert_{L^\infty(\mathbb{R}^d)} =\infty.
			\end{equation}
		\end{cor}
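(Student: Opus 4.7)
The plan is to combine Proposition \ref{prop:groundstate} with the monotonicity of $M \mapsto u_M$. Rearranging $E_0 \geq \inf_{\mathbb{R}^d}(u_M^{-1} - M)$ yields $\inf_{\mathbb{R}^d} u_M^{-1} \leq M + E_0$, hence $F(M) = M - \inf u_M^{-1} \geq -E_0$, so $F$ takes values in $[-E_0,\infty)$. Conversely, Proposition \ref{prop:existence} provides $u_M \in L^\infty(\mathbb{R}^d)$, so $\inf u_M^{-1} = 1/\|u_M\|_{L^\infty} > 0$ and $F(M) < M$. The sequence $(M_n)$ is therefore strictly decreasing and bounded below by $-E_0$, so $M_n \downarrow M_*$ for some $M_* \geq -E_0$, and the telescoping identity forces $M_n - M_{n+1} = \inf u_{M_n}^{-1} \to 0$, i.e.\ $\|u_{M_n}\|_{L^\infty} \to \infty$.

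To upgrade $M_* \geq -E_0$ to $M_* = -E_0$, the key ingredient is the pointwise monotonicity $u_M \geq u_{M'}$ whenever $M \leq M'$. At the level of (\ref{eq:weak form}) this amounts to $(-\Delta + V + M)(u_M - u_{M'}) = (M' - M) u_{M'} \geq 0$, which combined with the positivity-preserving property of the resolvent of $-\Delta + V + M$ yields $u_M \geq u_{M'}$ pointwise. Assuming for contradiction $M_* > -E_0$, Proposition \ref{prop:existence} gives $u_{M_*} \in L^\infty$; monotonicity then provides $u_{M_n} \leq u_{M_*}$ for every $n$, contradicting $\|u_{M_n}\|_{L^\infty} \to \infty$. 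Hence $M_* = -E_0$. The blow-up (\ref{eq:blowup}) is an immediate consequence of Proposition \ref{prop:groundstate}: as $M \to (-E_0)^+$, $\inf u_M^{-1} \leq M + E_0 \to 0^+$, forcing $\|u_M\|_{L^\infty} \to \infty$.

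The main obstacle is a rigorous justification of the monotonicity $u_M \geq u_{M'}$ on all of $\mathbb{R}^d$. Because $\mathbf{1} \notin L^2(\mathbb{R}^d)$, the identity $R_M - R_{M'} = (M'-M) R_M R_{M'}$ cannot be read off standard $L^2$ spectral theory and must instead be interpreted through the positivity-preserving kernel of $R_M$. A clean constructive route, matching the existence proof in Proposition \ref{prop:existence}, is to approximate $u_M$ by the Dirichlet torsion functions $u_M^{(R)}$ on expanding balls $B_R$: there the finite-volume resolvent is positivity preserving, so that $u_M^{(R)} \geq u_{M'}^{(R)}$ for $M \leq M'$ is standard, and the inequality is preserved when passing to the limit $R \to \infty$ by monotone convergence.
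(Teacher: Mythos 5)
Your argument is correct, and it reaches the limit $M_n\to -E_0$ by a different mechanism than the paper. The paper proves that $F$ is continuous on $(-E_0,\infty)$ (using the monotonicity $u_{M+\delta}\leq u_M$ together with pointwise convergence $u_{M+\delta}\to u_M$ as $\delta\to 0^+$, citing Perel'muter--Semenov) and then rules out a fixed point $F(M_\infty)=M_\infty>-E_0$ since $F(M)<M$ on the domain. You instead observe that $M_n-M_{n+1}=\inf_{\mathbb{R}^d} u_{M_n}^{-1}\to 0$, hence $\Vert u_{M_n}\Vert_{L^\infty}\to\infty$, and contradict this with $u_{M_n}\leq u_{M_*}\in L^\infty$ if $M_*>-E_0$; this needs only the one-sided monotonicity $M\leq M'\Rightarrow u_M\geq u_{M'}$ and dispenses with the continuity/pointwise-convergence step, which is a genuine simplification. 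Your derivation of \eqref{eq:blowup} is also more direct (and slightly stronger) than the paper's: you get $\Vert u_M\Vert_{L^\infty}\geq (M+E_0)^{-1}$ for every $M>-E_0$ straight from Proposition \ref{prop:groundstate}, whereas the paper argues along the sequence $M_n$ only. Two small remarks. First, for the monotonicity you do not need the Dirichlet-on-balls approximation you sketch (whose convergence to $u_M$ is not established in the paper); the paper's own approximants $u_{M,L}=(-\Delta+V+M)^{-1}\mathbbm{1}_{B(0,L)}$ are whole-space resolvents applied to $L^2$ functions, so the resolvent identity plus Lemma \ref{lm:positivity} gives $u_{M,L}\geq u_{M',L}$ for $M\leq M'$, and the pointwise limit $L\to\infty$ from the proof of Proposition \ref{prop:existence} yields $u_M\geq u_{M'}$; alternatively one simply cites \cite[Lemma 2.1]{perelmuter1981positivity} as the paper does. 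Second, like the paper you only get $F(M)\geq -E_0$ from \eqref{eq:ground state}, so strict invariance of $(-E_0,\infty)$ (needed for the iteration to be defined for all $n$) is glossed over in both arguments; this is a shared, minor gap rather than a defect of your route.
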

		We remark that \eqref{eq:blowup} is in stark contrast to the random case. A shifted landscape function was considered in~\cite{david2023landscape} in the case of a random potential. There, it is shown that the limit $M\to -E_0^+$ can be taken after averaging over the random potential. In the present case where no averaging is available, we will see explicitly in Section~\ref{sect:examples} in the case of a radial potential well that $u_M$ blows up at every single point in the same limit, namely $\lim_{M\rightarrow (-E_0)^+} (u_M^{-1}-M)=E_0$. Thus, in the deterministic case, the only spectral information surviving the limiting procedure is the ground state energy.

Finally, we point out that our proofs allow us to treat magnetic Schrödinger operators with suitably well-behaved magnetic fields. 
	
\subsubsection{Asymptotics at the bottom of the essential spectrum: The atomic case}

We have just discussed how the landscape function provides a sharp estimate on the ground state energy. Similarly, we now show that the landscape functions yields the sharp eigenvalue asymptotics below the essential spectrum in the case of radial inverse power law potentials.
		
		\begin{prop} \label{prop:powerlaw}
			Let $\rho_d =\min \{d, 2\}$. For $\rho \in (0,\rho_d)$ the potential $V(x)=-\vert x \vert^{-\rho}$ is in the Kato class $\caK_d$. Let $M>- E_0$ and let $u_M$ be the function given by Proposition~\ref{prop:existence}. Then as $\mu \rightarrow 0^-$
		\begin{align}
			\caN^{V}(\mu) &= (1+o(1)) \left[ (2\sqrt{\pi})^d \Gamma\left( \frac{d}{2}+1 \right) \right]^{-1} \int_{\mathbb{R}^d} \left( \mu - \left(\frac{1}{u_M(x)}-M\right) \right)_+^{d/2} dx \label{CLR asymptotics 1}\\
			 &= (1+o(1)) \caN^{\frac{1}{u_M}-M}(\mu).\label{CLR asymptotics 2}
		\end{align}
	\end{prop}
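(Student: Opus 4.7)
The strategy is to combine the classical semiclassical asymptotic for Schr\"odinger operators with long-range attractive potentials together with a pointwise comparison of $\frac{1}{u_M}-M$ and $V$ at infinity. The first ingredient is the standard Weyl-type asymptotic for eigenvalues accumulating at the bottom of the essential spectrum: for a (say Kato-class) potential $W$ with $W(x)=-|x|^{-\rho}(1+o(1))$ as $|x|\to\infty$ and $0<\rho<2$,
\begin{equation*}
\caN^W(\mu) = (1+o(1))\left[(2\sqrt{\pi})^d\,\Gamma\bigl(d/2+1\bigr)\right]^{-1}\!\int_{\bbR^d}\!\bigl(\mu-W(x)\bigr)_+^{d/2}\,dx
\end{equation*}
as $\mu\to 0^-$. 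Applied to $W=V$ this gives the ``Weyl side'' of \eqref{CLR asymptotics 1}; applied to $W=\frac{1}{u_M}-M$ it will yield \eqref{CLR asymptotics 2} once \eqref{CLR asymptotics 1} is established. Thus the task reduces to showing that replacing $V$ by $\frac{1}{u_M}-M$ in the integral on the right leaves its leading-order behavior as $\mu\to 0^-$ unchanged.

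The technical core is the pointwise equivalence $\frac{1}{u_M(x)}-M = V(x)(1+o(1))$ as $|x|\to\infty$. Dividing $(-\Delta+V+M)u_M=1$ by $u_M$ yields the exact identity $\frac{1}{u_M}-M-V=-\Delta u_M/u_M$, so it suffices to prove $\Delta u_M(x)=o(V(x))$. Write $u_M=1/M+w$; the correction solves $(-\Delta+V+M)w=-V/M$ with $M>0$. The resolvent of $-\Delta+V+M$ has an exponentially decaying kernel by Agmon-type estimates (since $V\to 0$ at infinity and $M+E_0>0$), and $V(y)/V(x)\to 1$ uniformly for $|x-y|$ bounded and $|x|\to\infty$. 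Together with the identity $(-\Delta+V+M)^{-1}\mathds{1}=u_M$, convolution of this kernel against the slowly varying $V$ gives $w(x) = -V(x)\,u_M(x)/M + o(V(x)) = -V(x)/M^2+o(V(x))$. Substituting this into $\Delta u_M=(V+M)u_M-1$ cancels the leading $O(V)$ terms and leaves $\Delta u_M=o(V)$, as required.

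To pass from the pointwise to the integral asymptotic, set $W:=\frac{1}{u_M}-M$, $\epsilon(x):=W(x)/V(x)-1\to 0$, and rescale $x=|\mu|^{-1/\rho}y$:
\begin{equation*}
\int_{\bbR^d}\bigl(\mu-W(x)\bigr)_+^{d/2}\,dx=|\mu|^{d/2-d/\rho}\!\int_{\bbR^d}\!\bigl(|y|^{-\rho}(1+\epsilon(|\mu|^{-1/\rho}y))-1\bigr)_+^{d/2}dy.
\end{equation*}
Dominated convergence (the integrand is bounded by $(C|y|^{-\rho}-1)_+^{d/2}$, integrable since $\rho<2$) shows this equals its counterpart with $V$ in place of $W$ up to a factor $1+o(1)$; combined with the Weyl asymptotic for $V$ this yields \eqref{CLR asymptotics 1}, and \eqref{CLR asymptotics 2} follows by one further application of the same asymptotic to $W$. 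The main obstacle is the pointwise step: controlling $u_M-1/M$ at infinity at the sharp rate set by $V$ needs quantitative Agmon-type kernel estimates for $(-\Delta+V+M)^{-1}$ together with the slow, self-similar variation of $-|x|^{-\rho}$ to produce the required leading-order cancellation in $\Delta u_M$.
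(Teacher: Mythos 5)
Your overall architecture parallels the paper's: compare $\frac{1}{u_M}-M$ with $V$ pointwise at infinity via the equation and resolvent-kernel estimates, transfer the phase-space integral by scaling/dominated convergence, and then invoke a Weyl-type asymptotic at the bottom of the essential spectrum. Your pointwise sketch ($u_M=\tfrac1M-\tfrac{V}{M^2}+o(V)$, hence $\Delta u_M=o(V)$ and $\frac{1}{u_M}-M=V(1+o(1))$) is essentially sound and can be completed with the Green's function bounds of Simon's Theorem B.7.2 that the paper also uses; and your dominated-convergence rescaling does suffice to deduce \eqref{CLR asymptotics 1} from the known asymptotic for the exact power law $V$.

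The genuine gap is the "standard Weyl-type asymptotic" you invoke for a Kato-class potential $W$ satisfying only $W(x)=-\vert x\vert^{-\rho}(1+o(1))$, which is what you need to get \eqref{CLR asymptotics 2} by applying it to $W=\frac{1}{u_M}-M$. The results available in the literature (and quoted in the paper as Theorem~\ref{thm:BrownellClark}, i.e.\ Brownell's theorem, or Reed--Simon XIII.82, Raikov) require much more structure: a radial profile $q$ of class $C^5$ with the derivative-ratio conditions \eqref{derivative condition basic} and a quantitative remainder $W=q(\vert x\vert)+O(\vert x\vert^{-2})$. A mere relative error $o(1)$ is strictly weaker than these hypotheses, so as written the last step of your proof rests on an unproved (and uncited) theorem. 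This is exactly why the paper does not stop at the one-term expansion: it derives the full expansion $u_M=\sum_{j=0}^{N}(-V)^j/M^{j+1}+O(\vert x\vert^{-2})$ with $N\rho<2\leq(N+1)\rho$, using cutoffs near the origin and infinity, positivity of the resolvent, and exponentially weighted $L^2$-to-pointwise estimates, so that $\frac{1}{u_M}-M=-V+P(\vert x\vert^{-\rho})+O(\vert x\vert^{-2})$ verifies Brownell's hypotheses and \eqref{CLR asymptotics 2} follows by citation. To close your gap you must either (a) upgrade your pointwise analysis to this sharper expansion (iterating your own resolvent argument $N$ times would do it, but that is essentially the paper's proof), or (b) actually prove the asymptotic under the weak hypothesis $W=-\vert x\vert^{-\rho}(1+o(1))$, e.g.\ by Dirichlet--Neumann bracketing against $-(1\pm\delta)\vert x\vert^{-\rho}$ outside a large ball and a separate argument that the compact region contributes $o(\mathcal{N}^W(\mu))$ uniformly as $\mu\to0^-$; neither step is in your proposal, and without one of them \eqref{CLR asymptotics 2} is not established.
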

The first equality above shows the exactness of the CLR asymptotics when using the landscape function $\frac{1}{u_M}-M$, with the same constant as with the original potential $V$. In the second equality, we point out that $\caN^{\frac{1}{u_M}-M}$ is the counting function for the operator $-\Delta + (\frac{1}{u_M}-M)$: This is non-trivial since $-\Delta + V$ is unitarily equivalent to $-\frac{1}{u_M^2}\nabla\cdot u_M^2 \nabla+ (\frac{1}{u_M}-M)$, namely for a modified kinetic energy. This proves a conjecture of~\cite[Equation $(1.5)$]{david2021landscape} in this particular case.

		\section{Variational argument: Proof for Theorem \ref{thm:main}} \label{sect:CLR}
		
We follow the strategy of \cite{bachmann2023counting} and first estimate $\mathcal{N}^V$ in terms of some coarse-grained volume of the sublevel sets of the effective potential. In a second step we will relate this to the measure of the sublevel set of $1/u_M$.
		
		We first introduce some notation. A box of sidelength $\ell$ is a set of the form $\times_{i=1}^d[a_i,b_i]$ where $b_i-a_i = \ell$ and $a_i, b_i \in \ell \mathbb{Z}$. For any $\ell>0$, we consider the collection $\caQ_\ell$ of boxes of sidelength $\ell$ such that $\bigcup_{Q\in\caQ_\ell}Q = \bbR^d$ and $\mathring Q \cap \mathring Q' = \emptyset$ whenever $Q\neq Q'$. We define for any $\mu>0$
		\begin{align*}
			N(\mu) = \left\vert \left\{ Q\in \caQ_{\mu^{-1/2}} \ : \ \inf_Q \frac{1}{u_M} \leq \mu  \right\}\right\vert
		\end{align*}
		and
		\begin{align*}
			n(\mu) = \left\vert \left\{ Q\in \caQ_{\mu^{-1/2}} \ : \ \sup_Q \, \frac{1}{u_M} \leq \mu  \right\}\right\vert.
		\end{align*}
		The reason we are considering only cubes with corner on $\ell \mathbb{Z}^d$ is to make it possible to compare $\caQ_{\ell}$ with $\caQ_{n\ell}$ for $n\in \mathbb{N}$, the former being a refinement of the later.
		
		For the class of potentials considered here, namely those satisfying the Harnack condition~(\ref{Harnack Moser inequality}), both coarse-grained volumes are directly related to the measure $\caV(\mu)$ of the sublevel set. The following lemma is essentially the same as \cite[Lemma 4.1]{bachmann2023counting}.
		
		\begin{lemma}  \label{lm:comparing}
			Let $V$ satisfy the conditions of Theorem~\ref{thm:main}. Then
			\begin{align*} 
				n(\mu) \leq \mu^{d/2} \caV_M(\mu) \leq N(\mu) \leq n\Big(\left\lceil \sqrt{2 C_{HM}} \right\rceil^2 \mu\Big)
			\end{align*}
			for all $\mu \in \mathbb{R}$. Here $C_{HM}$ is the Harnack-Moser constant of~\eqref{Harnack Moser inequality}.
		\end{lemma}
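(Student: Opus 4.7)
The plan is to split the three inequalities into two purely geometric ones (left and middle) and a single substantive one (right) that uses the Harnack-Moser inequality~\eqref{Harnack Moser inequality}. Throughout I assume $\mu>0$ since otherwise the statement is vacuous.

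For $n(\mu)\leq\mu^{d/2}\caV_M(\mu)$: each cube $Q$ counted by $n(\mu)$ has $\sup_Q u_M^{-1}\leq\mu$, hence $Q\subseteq\{u_M^{-1}\leq\mu\}$. Since cubes in $\caQ_{\mu^{-1/2}}$ are interior-disjoint and have volume $\mu^{-d/2}$ each, summing gives $n(\mu)\mu^{-d/2}\leq\caV_M(\mu)$. For $\mu^{d/2}\caV_M(\mu)\leq N(\mu)$: the sublevel set is covered by those $Q\in\caQ_{\mu^{-1/2}}$ that meet it, and each such $Q$ satisfies $\inf_Q u_M^{-1}\leq\mu$, hence is counted by $N(\mu)$; thus $\caV_M(\mu)\leq N(\mu)\mu^{-d/2}$.

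The heart of the lemma is $N(\mu)\leq n(k^2\mu)$ with $k:=\lceil\sqrt{2C_{HM}}\rceil$. I would proceed by refinement: each $Q\in\caQ_{\mu^{-1/2}}$ decomposes into $k^d$ sub-cubes of $\caQ_{(k^2\mu)^{-1/2}}=\caQ_{k^{-1}\mu^{-1/2}}$, the two grids being compatible since $\mu^{-1/2}\bbZ^d\subset k^{-1}\mu^{-1/2}\bbZ^d$. Given $Q$ counted by $N(\mu)$, the condition $\inf_Q u_M^{-1}\leq\mu$ reads $\sup_Q u_M\geq\mu^{-1}$. By pigeonhole over the $k^d$ sub-cubes, at least one sub-cube $Q^\ast$ satisfies $\sup_{Q^\ast}u_M=\sup_Q u_M\geq\mu^{-1}$. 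Applying~\eqref{Harnack Moser inequality} to $Q^\ast$, whose sidelength is $k^{-1}\mu^{-1/2}$, yields
\begin{equation*}
\mu^{-1}\leq\sup_{Q^\ast}u_M\leq C_{HM}\bigl(\inf_{Q^\ast}u_M+k^{-2}\mu^{-1}\bigr),
\end{equation*}
so that $\inf_{Q^\ast}u_M\geq(C_{HM}^{-1}-k^{-2})\mu^{-1}\geq k^{-2}\mu^{-1}$, where the final inequality uses the defining estimate $k^2\geq 2C_{HM}$. Equivalently $\sup_{Q^\ast}u_M^{-1}\leq k^2\mu$, so $Q^\ast$ is counted by $n(k^2\mu)$. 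Since sub-cubes of distinct $Q$'s are disjoint, the assignment $Q\mapsto Q^\ast$ is injective and the inequality follows.

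The main difficulty lies entirely in this last step: one must refine finely enough to exploit the Harnack-Moser inequality, yet the refinement introduces an additive penalty $\ell(Q^\ast)^2=k^{-2}\mu^{-1}$ that must be reabsorbed. The clean arithmetic $C_{HM}^{-1}-k^{-2}\geq k^{-2}$ fixes the refinement factor $k\geq\sqrt{2C_{HM}}$ and dictates the ceiling appearing in the statement. A minor secondary point is the existence of a sub-cube preserving the supremum, which is a trivial pigeonhole in the essential-supremum formulation and immediate from compactness under the continuity of $u_M$ guaranteed by Proposition~\ref{prop:existence}.
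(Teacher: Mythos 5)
Your proposal is correct and follows essentially the same route as the paper: the first two inequalities by comparing interior-disjoint boxes with the sublevel set, and the third by refining each box counted by $N(\mu)$ into $\lceil\sqrt{2C_{HM}}\rceil^d$ subboxes, locating a subbox where $\sup u_M\geq\mu^{-1}$, and applying the Harnack--Moser inequality with the choice $k^2\geq 2C_{HM}$ absorbing the additive term $\ell(Q^\ast)^2$. The only cosmetic difference is that the paper bounds $\inf_{\widetilde Q}u_M\geq (2C_{HM}\mu)^{-1}$ and then uses $2C_{HM}\leq K$, while you fold the same arithmetic into the single estimate $C_{HM}^{-1}-k^{-2}\geq k^{-2}$.
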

		\begin{proof}
			The first two inequalities are immediate as, up to null sets, $n(\mu)/\mu^{d/2}$ is the measure of all boxes that are strictly contained in the sublevel set $\{1/u_M \leq \mu \}$ and $N(\mu)/\mu^{d/2}$ is the measure of all the boxes that intersect the sublevel set.
			
			Let $K=\lceil \sqrt{2C_{HM}} \rceil^2$, then the cubes in $\caQ_{(K\mu)^{-1/2}}$ are subcubes of exactly one cube in $\caQ_{\mu^{-1/2}}$ since $\sqrt{K}$ is an integer. To prove the last inequality, it is sufficient to show that every cube in $\caQ_{\mu^{-1/2}}$ which contributes to $N(\mu)$ admits a subcube in $\caQ_{(K\mu)^{-1/2}}$ which contributes to $n(K \mu)$.	If $Q\in \caQ_{\mu^{-1/2}}$ satisfies $\sup_Q \frac{1}{u_M} \leq \mu$ then all its subcubes contribute to $n(K \mu)$. The only other option for $Q$ to contribute to $N(\mu)$ is if $\inf_Q \frac{1}{u_M} \leq \mu < \sup_Q \frac{1}{u_M}$. For those, we have in particular $\sup_Q u_M \geq \frac{1}{\mu}$. Now pick a subcube $\widetilde{Q}\in \caQ_{(K\mu)^{-1/2}}$ of $Q$ such that $\sup_{\widetilde{Q}} u_M \geq \frac{1}{\mu}$. Then by \eqref{Harnack Moser inequality} we get, as $K=\lceil \sqrt{2C_{HM}} \rceil^2 \geq 2C_{HM}$,
			
			\begin{align*}
				\inf_{\widetilde{Q}} u_M \geq \frac{1}{C_{HM}} \sup_{\widetilde{Q}} u_M - \frac{1}{K \mu} \geq \frac{1}{C_{HM} \mu} - \frac{1}{2 C_{HM} \mu} = \frac{1}{2C_{HM} \mu}.
			\end{align*} 
			Thus, $\sup_{\widetilde{Q}} \frac{1}{u_M} \leq 2C_{HM} \mu \leq K \mu$, and so $\widetilde{Q}$ contributes to $n(K \mu)$.
			\end{proof}

We now structure the proof of Theorem~\ref{thm:main} in two lemmas for the upper, respectively the lower bound. 
		
		\begin{lemma}
			\label{lma:UpperBound}
			Let $V$ satisfy the conditions of Theorem~\ref{thm:main}. Then
			\begin{align*}
				\mathcal{N}^{V+M}(\mu) \leq N(C\mu)
			\end{align*}
			for all $C>\frac{4d C_H^2}{\pi^2}$ and all $\mu\in \mathbb{R}$.
		\end{lemma}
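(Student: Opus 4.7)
The approach is variational via Glazman's lemma: $\mathcal{N}^{V+M}(\mu)$ is bounded above by the codimension of any closed subspace $\mathcal{W}\subset H^1(\mathbb{R}^d)$ on which $\langle \varphi, (-\Delta+V+M)\varphi\rangle > \mu \|\varphi\|^2$ for all nonzero $\varphi\in\mathcal{W}$. The plan is to construct such a $\mathcal{W}$ of codimension at most $N(C\mu)$ by imposing exactly one linear constraint per bad cube at scale $\ell=(C\mu)^{-1/2}$.

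Concretely, let $\caB \subset \caQ_\ell$ denote the set of bad cubes, i.e.\ those with $\inf_Q 1/u_M \leq C\mu$ (equivalently $\sup_Q u_M \geq \ell^2$); by construction $|\caB|=N(C\mu)$. I take $\mathcal{W}$ to be the subspace of $\varphi \in H^1(\mathbb{R}^d)$ satisfying $\int_Q \varphi/u_M \, dx = 0$ for every $Q \in \caB$, which has codimension at most $N(C\mu)$. The central tool for estimating the quadratic form on $\mathcal{W}$ is the ground-state (Doob-type) representation: setting $\varphi = u_M\psi$ and integrating by parts against the weak equation~\eqref{eq:weak form} yields
\[
\int_{\mathbb{R}^d} |\nabla\varphi|^2 + (V+M)|\varphi|^2\,dx \,=\, \int_{\mathbb{R}^d} u_M^2 |\nabla\psi|^2\,dx \,+\, \int_{\mathbb{R}^d} \frac{|\varphi|^2}{u_M}\,dx.
\]
Since the right-hand side is a sum of nonnegative integrals, it decomposes cube-wise over $\caQ_\ell$, and the proof reduces to showing that on each $Q$ the local bound $\int_Q u_M^2|\nabla\psi|^2 + \int_Q|\varphi|^2/u_M \geq \mu\int_Q |\varphi|^2$ holds for $\varphi\in\mathcal{W}$.

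On good cubes $Q\notin\caB$, the potential-like term alone does the job: $\int_Q |\varphi|^2/u_M \geq C\mu \int_Q |\varphi|^2 \geq \mu \int_Q|\varphi|^2$ whenever $C\geq 1$. On bad cubes I would combine the imposed constraint $\int_Q \psi\,dx = 0$ with the Neumann Poincaré inequality on the cube to get $\int_Q|\nabla\psi|^2 \geq c_d\,\ell^{-2}\int_Q|\psi|^2$, then sandwich using $\int_Q u_M^2|\nabla\psi|^2 \geq \inf_Q u_M^2\cdot \int_Q|\nabla\psi|^2$ and $\int_Q|\varphi|^2 \leq \sup_Q u_M^2\cdot \int_Q|\psi|^2$. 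The resulting sufficient condition becomes an upper bound on $(\sup_Q u_M/\inf_Q u_M)^2$, which is precisely the quantity controlled by the Harnack-Moser inequality~\eqref{Harnack Moser inequality}.

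The main obstacle lies in the fact that~\eqref{Harnack Moser inequality} is \emph{not} scale-invariant: the additive $\ell^2$ term in $\sup_Q u_M \leq C_H(\inf_Q u_M + \ell^2)$ is harmless only because the very definition of bad cube guarantees $\sup_Q u_M \geq \ell^2$, which lets us absorb $C_H\ell^2$ into $\sup_Q u_M$ and recover an effective scale-invariant Harnack ratio of order $C_H$. Matching this ratio against the dimension-dependent Poincaré constant on a cube in $\mathbb{R}^d$, and demanding that the threshold beats $\mu$ strictly, produces the explicit condition $C > 4d\,C_H^2/\pi^2$ stated in the lemma.
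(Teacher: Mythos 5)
Your overall strategy is exactly the paper's: Glazman/min--max with a codimension-$N(C\mu)$ subspace determined by mean-zero constraints $\int_Q \varphi/u_M = 0$, the ground-state (conjugation) identity, a cube-by-cube split, and Poincar\'e combined with a Harnack ratio on bad cubes. However, there is a genuine gap in the last paragraph, where you handle the lack of scale-invariance of~\eqref{Harnack Moser inequality}.

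You define the bad cubes by $\inf_Q 1/u_M \le C\mu$, which translates to $\sup_Q u_M \ge 1/(C\mu) = \ell^2$, and you then assert that this lets you ``absorb $C_{HM}\ell^2$ into $\sup_Q u_M$.'' That step does not go through: from
\[
\sup_Q u_M \le C_{HM}\bigl(\inf_Q u_M + \ell^2\bigr)
\]
and the mere bound $\sup_Q u_M \ge \ell^2$, you get $(1-C_{HM})\sup_Q u_M \le C_{HM}\inf_Q u_M$, which is vacuous as soon as $C_{HM}\ge 1$ --- and the Harnack constant always satisfies $C_{HM}\ge 1$. To absorb $C_{HM}\ell^2$ into $\tfrac12 \sup_Q u_M$ you would need $\sup_Q u_M \ge 2C_{HM}\ell^2$, which your threshold does not provide. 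The paper avoids this by defining the bad cubes as $\inf_Q 1/u_M \le \mu$ (not $C\mu$), which yields the stronger bound $\sup_Q u_M \ge 1/\mu = C\ell^2$; for $C \ge 2C_{HM}$ this gives $C_{HM}\inf_Q u_M \ge \sup_Q u_M - C_{HM}/(C\mu) \ge \tfrac{1}{2C_{HM}\mu}$ and then $\sup_Q u_M \le 2C_{HM}\inf_Q u_M$. With this threshold the bad-cube family is a \emph{subset} of the one counted by $N(C\mu)$ (so the codimension bound still holds), and the good-cube bound $\inf_Q 1/u_M > \mu$ is still exactly $\ge \mu$ as required. So the fix is to move the factor of $C$ from the sublevel threshold to the cube side length only; your version puts $C$ in both places, and the Harnack step then fails.
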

		\begin{proof}
			For $\mu\leq 0$ we have $\mathcal{N}^{V+M}(\mu) =0= N(C\mu)$. Thus, we will assume for the rest of the proof that $\mu>0$. In order to have that $\mathcal{N}^V(\mu) \leq N$ it suffices, by the Min-Max Principle (see \cite[Theorem XIII.2]{reed1978iv}), to exhibit a subspace $\mathcal{H}_N\subseteq \mathrm{dom}(H^{1/2})$ with codimension at most $N$ such that
			\begin{align*}
				\int_{\mathbb{R}^d} \left( \vert \nabla v \vert^2 + (V+M) \vert v \vert^2 \right) > \mu \int_{\mathbb{R}^d} \vert v \vert^2
			\end{align*}
			for all $v \in \mathcal{H}_N$. Let $\mathcal{F}$ be the collection of boxes such that 
			\begin{align*}
				\mathcal{F} = \left\{ Q\in \caQ_{(C\mu)^{-1/2}} \ : \ \inf_Q \, \frac{1}{u_M} \leq \mu \right\},
			\end{align*}
			where $C>0$ will be chosen later, and let
			\begin{align*}
				\mathcal{H}_N = \left\{ v \in \mathrm{dom}(H^{1/2}) \ : \  \int_Q \frac{v}{u_M} =0 \quad \forall Q\in \mathcal{F} \right\}.
			\end{align*}
			The codimension of $\mathcal{H}_N$ is equal to $\vert \mathcal{F}\vert =N(C\mu)$. 
			
			We first claim that
			\begin{equation} \label{conjugation}
				\int_{\mathbb{R}^d} \left(\vert \nabla \varphi \vert^2 +  (V+M) \vert \varphi \vert^2 \right) 
				= \int_{\mathbb{R}^d} u_M^2 \bigg(\left \vert\nabla \left(\frac{\varphi}{u_M} \right) \right\vert^2 + \frac{1}{u_M}\left \vert \frac{\varphi}{u_M} \right  \vert^2\bigg)
			\end{equation}
			for all $\varphi\in H^1(\mathbb{R}^d) \cap L^\infty(\mathbb{R}^d)$ with compact support. As $u_M\in H^1_\mathrm{loc}(\mathbb{R}^d)$ and $1/u_M \in L^\infty_\mathrm{loc}(\mathbb{R}^d)$ by assumption, we get that $\varphi^2/u_M \in H^1(\mathbb{R}^d) \cap L^\infty(\mathbb{R}^d)$ with compact support and thus, after a simple approximation argument, we have
			\begin{align*}
				\int_{\mathbb{R}^d} \left( \nabla u_M \cdot \nabla \left(\frac{\vert \varphi\vert^2}{u_M}\right) + (V+M) u_M  \frac{\vert\varphi\vert^2}{u_M} \right) = \int_{\mathbb{R}^d} \frac{\vert \varphi \vert^2}{u_M},
			\end{align*}
			since $u_M$ solves~(\ref{eq:weak form}). Furthermore, using the product rule for Sobolev functions \cite[Lemma 7.4]{lieb2001analysis} yields
			\begin{align*}
				\nabla u_M \cdot \nabla \left(\frac{\vert \varphi\vert^2}{u_M}\right) = \vert \nabla \varphi \vert^2 - u_M^2 \left \vert\nabla \left(\frac{\varphi}{u_M} \right) \right\vert^2,
			\end{align*}
			which readily implies \eqref{conjugation}.
			As $C_c^\infty(\mathbb{R}^d)$ is dense in the form domain of $-\Delta +V+M$ (see \cite[Theorem A.2.8.]{simon1982schrodinger}) the statement of the lemma follows if we can show that there exists $\varepsilon>0$ such that for all non-zero $\varphi\in C_c^\infty(\mathbb{R}^d)  \cap \caH_N$, 
			\begin{align*}
\int_{\mathbb{R}^d}  \bigg(u_M^2 \left \vert\nabla \left(\frac{\varphi}{u_M} \right) \right\vert^2 +  \frac{\vert \varphi\vert^2}{u_M}  \bigg)
  > (1+\varepsilon)\mu \int_{\mathbb{R}^d} \vert \varphi \vert^2.
			\end{align*}
			The additional $\varepsilon$ is needed to ensure that the strict inequality (without $\varepsilon$) holds true on all of $\caH_N$.
			We check this inequality using the partition into boxes. In any box $Q\notin \mathcal{F}$, the bound $\inf_Q 1/u_M > \mu$ and the nonnegativity of the first term yield the claim. If $Q\in \mathcal{F}$, we use that $\varphi\in\caH_N$, namely that the integral of $\varphi/u_M$ vanishes, together with Poincar\'e inequality on $Q$ with optimal constant $\frac{\pi^2}{d}(C\mu)$, see \cite{payne1960optimal}, to conclude that
			\begin{align*}
				\int_Q u_M^2 \vert \nabla \left(\frac{\varphi}{u_M} \right) \vert^2
				\geq \left( \frac{\inf_Q u_M}{\sup_Q u_M} \right)^2 \frac{\pi^2 C \mu }{d} \int_{Q} \vert \varphi \vert^2.
			\end{align*}
			However, using the Harnack-Moser inequality \eqref{Harnack Moser inequality} and the definition of $\mathcal{F}$ we obtain
			\begin{align*}
				C_{HM} \inf_Q u_M \geq \sup_Q u_M - \frac{C_{HM}}{C\mu} \geq \left( 1 - \frac{C_{HM}}{C}  \right) \frac{1}{\mu}.
			\end{align*}
			Thus, $\inf_Q u_M \geq \frac{1}{2C_{HM}\mu}$ for any $C\geq 2C_{HM}$ and hence, again by~(\ref{Harnack Moser inequality})
			\begin{align*}
				\sup_Q u_M \leq C_{HM}\left( \inf_Q u_M + (C\mu)^{-1}\right)  \leq 2 C_{HM} \inf_Q u_M.
			\end{align*}
			This yields
			\begin{align*}
				\int_Q u_M^2 \vert \nabla \left(\frac{\varphi}{u_M} \right) \vert^2
				\geq \frac{C\pi^2}{4d C_{HM}^2} \mu \int_Q \vert \varphi \vert^2
			\end{align*}
			and so any choice of $C>\frac{4d C_{HM}^2}{\pi^2}$ will give the desired estimate.
		\end{proof}
	
		\begin{lemma}\label{lma:LowerBound}
			Let $V$ satisfy the conditions of Theorem~\ref{thm:main}. Then
			\begin{align*}
				n(\mu) \leq \mathcal{N}^{V+M}\left(C\mu\right)
			\end{align*}
			where $C=1+2^{d+2}C_{HM}^2$ and for all $\mu\in\bbR$. 
		\end{lemma}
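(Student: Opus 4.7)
The plan is to dualize the variational strategy of Lemma~\ref{lma:UpperBound}: instead of removing a subspace to push the Rayleigh quotient above $\mu$, I construct an explicit trial subspace on which the Rayleigh quotient stays below $C\mu$. By the Min-Max Principle, exhibiting a subspace $\mathcal{H}\subset \mathrm{dom}((-\Delta+V+M)^{1/2})$ of dimension $n(\mu)$ on which $R(\varphi) \leq C\mu$ for all $\varphi \in \mathcal{H}$ immediately yields $\mathcal{N}^{V+M}(C\mu) \geq n(\mu)$.

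I build $\mathcal{H}$ by localizing to the cubes counted by $n(\mu)$. For each $Q \in \mathcal{Q}_{\mu^{-1/2}}$ with $\sup_Q 1/u_M \leq \mu$ (equivalently $\inf_Q u_M \geq 1/\mu = \ell(Q)^2$), fix a Lipschitz bump $\eta_Q$ supported in $Q$ --- for instance, the function equal to one on the concentric sub-cube of side $\ell(Q)/2$ and linearly interpolated to zero at $\partial Q$. Set $\varphi_Q = u_M\,\eta_Q$; the local hypotheses $u_M \in H^1_{\mathrm{loc}} \cap L^\infty_{\mathrm{loc}}$ ensure $\varphi_Q \in H^1(\mathbb{R}^d)$. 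The essentially disjoint supports of the $\varphi_Q$ make them linearly independent, so $\mathcal{H} := \operatorname{span}\{\varphi_Q\}$ has dimension exactly $n(\mu)$; moreover for any $\varphi = \sum_Q c_Q \varphi_Q \in \mathcal{H}$ both numerator and denominator of $R(\varphi)$ split as sums over $Q$, giving $R(\varphi) \leq \max_Q R(\varphi_Q)$. Thus it suffices to bound each individual Rayleigh quotient.

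To do so I reuse the conjugation identity~\eqref{conjugation} established in the proof of Lemma~\ref{lma:UpperBound}, which applied to $\varphi_Q = u_M\,\eta_Q$ yields
\[
R(\varphi_Q) \;=\; \frac{\int_Q u_M^2\,|\nabla \eta_Q|^2}{\int_Q u_M^2\,\eta_Q^2} \;+\; \frac{\int_Q u_M\,\eta_Q^2}{\int_Q u_M^2\,\eta_Q^2}.
\]
The second term is a weighted average of $1/u_M$ against the probability measure $u_M^2\,\eta_Q^2\,dx / \|\varphi_Q\|^2$, hence bounded by $\sup_Q 1/u_M \leq \mu$. For the gradient term, I pull the weights out as a prefactor $(\sup_Q u_M / \inf_Q u_M)^2$. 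The crucial observation is that the defining condition $\inf_Q u_M \geq \ell(Q)^2$ on the cubes of $n(\mu)$ converts the scale-corrected Harnack-Moser inequality~\eqref{Harnack Moser inequality} into the clean multiplicative bound $\sup_Q u_M \leq 2C_{HM}\inf_Q u_M$, so the prefactor is at most $4C_{HM}^2$. A direct calculation bounds the remaining unweighted Dirichlet Rayleigh quotient of $\eta_Q$ on $Q$ by $2^d \mu$, yielding $R(\varphi_Q) \leq (1 + 2^{d+2}C_{HM}^2)\mu = C\mu$, as needed.

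The main technical point is this conversion of the scale-corrected Harnack-Moser inequality into a genuine multiplicative Harnack inequality: on cubes not counted by $n(\mu)$, where $\inf u_M$ need not dominate $\ell^2$, the additive term in~\eqref{Harnack Moser inequality} prevents such control, which is precisely why the lower bound must be stated in terms of $n(\mu)$ rather than the full sublevel-set volume. The explicit bump estimate and the verification that disjointness reduces $R$ on the span to a pointwise max over $Q$ are routine.
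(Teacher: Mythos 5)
Your construction is essentially the paper's own: the trial space spanned by $u_M$ times a cutoff on each cube counted by $n(\mu)$, the conjugation identity \eqref{conjugation} (equivalently \eqref{chi u Identity}) to rewrite the quadratic form as $\int \eta_Q^2 u_M + \int |\nabla \eta_Q|^2 u_M^2$, the bound $\sup_Q \frac{1}{u_M}\leq\mu$ for the zeroth-order term, and the key observation that $\inf_Q u_M \geq 1/\mu = \ell(Q)^2$ upgrades \eqref{Harnack Moser inequality} to the multiplicative bound $\sup_Q u_M \leq 2C_{HM}\inf_Q u_M$. The reduction of the Rayleigh quotient on the span to the individual $\varphi_Q$ via disjoint supports is also fine (the paper leaves this implicit). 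The only structural difference is bookkeeping: you pull out $(\sup_Q u_M/\inf_Q u_M)^2$ times the Dirichlet quotient of the cutoff, whereas the paper bounds $\int_Q u_M^2 \leq 2^{d+2}C_{HM}^2 \int \chi_Q^2 u_M^2$ using $\chi_Q\equiv 1$ on $Q/2$; these are interchangeable.

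The quantitative last step, however, fails: the claim that the unweighted Dirichlet Rayleigh quotient of $\eta_Q$ on $Q$ is at most $2^d\mu$ is false. Since $\eta_Q$ vanishes on $\partial Q$ and $\ell(Q)=\mu^{-1/2}$, the quotient is bounded below by the first Dirichlet eigenvalue $d\pi^2\mu$ of $Q$, which exceeds $2^d\mu$ for all $d\leq 5$, so no choice of cutoff supported in $Q$ can achieve your bound there; moreover, for your specific bump (equal to $1$ on the half cube, linearly interpolated) one has $|\nabla\eta_Q| = 4\mu^{1/2}$ on the frame $Q\setminus (Q/2)$, giving a quotient of at least $16(1-2^{-d})\mu$, again larger than $2^d\mu$ in low dimensions. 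As a result your computation does not deliver the stated constant $C=1+2^{d+2}C_{HM}^2$: with honest cutoff estimates ($\Vert\nabla\eta_Q\Vert_\infty \geq 4\mu^{1/2}$ and $\int\eta_Q^2\geq 2^{-d}\vert Q\vert$) your scheme yields roughly $C=1+2^{d+6}C_{HM}^2$. Such a constant still depends only on $d$ and $C_{HM}$ and therefore suffices for Theorem~\ref{thm:main} — and the same sensitivity to the cutoff gradient (the paper's $\widetilde C>4$, entering squared) is present in the paper's argument — but as written your ``direct calculation'' is a step that would fail, so you should either correct the bump estimate and adjust the constant accordingly, or argue only for a constant of the admissible form rather than the specific value $1+2^{d+2}C_{HM}^2$.
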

		\begin{proof}

			For a lower bound $N\leq \mathcal{N}^V(C\mu)$ it suffices, again by the Min-Max Principle, to find a subspace $\mathcal{H}_N \subseteq \mathrm{dom}(H^{1/2})$ of dimension at least $N$ such that 
			\begin{align*}
				\int_{\mathbb{R}^d} \left( \vert \nabla v \vert^2 + (V+M) \vert v \vert^2 \right) \leq C\mu \int_{\mathbb{R}^d} \vert v \vert^2.
			\end{align*}
			We define
			\begin{align*}
				\mathcal{F} = \left\{ Q \in \mathcal{Q}_{\mu^{-1/2}} \ : \ \sup_Q \, \frac{1}{u_M} \leq \mu \right\}.
			\end{align*}
			Furthermore, for a box $Q$ we pick $\chi_Q \in C_c^\infty(Q)$ with $0\leq \chi_Q \leq 1, \Vert \nabla \chi_Q \Vert_{L^\infty(\mathbb{R}^d)} \leq \widetilde{C}\mu^{1/2}$ for some $\widetilde{C}>4$, $\chi_Q \equiv 1$ on $Q/2$. Since the functions $\chi_Q u_M$ are non-zero and orthogonal to each other, the space
			\begin{align*}
				\mathcal{H}_N = \mathrm{span}\{ \chi_Q u_M \ : \ Q \in \mathcal{F} \}
			\end{align*}
			is of dimension $\vert \mathcal{F} \vert = n(\mu)$. 
			
			By assumption we have $u_M\in H_\mathrm{loc}^1(\mathbb{R}^d)\cap L_\mathrm{loc}^\infty(\mathbb{R}^d)$ and thus $\chi_Q u_M$ is in $H^1(\mathbb{R}^d) \cap L^\infty(\mathbb{R}^d)$ with compact support. By~\eqref{conjugation} and an approximation argument for $\chi_Q u_M$, we get
			\begin{equation}\label{chi u Identity}
				\int_{\mathbb{R}^d} \left( \vert \nabla(\chi_Q u_M)\vert^2 + (V+M) \chi_Q^2 u_M^2 \right)
				= \int_{\mathbb{R}^d} \left( \chi_Q^2 u_M + \vert \nabla \chi_Q \vert^2 u_M^2 \right),
			\end{equation}
and hence, since $Q\in\caF$,			%
			\begin{align} \label{chi u UpperBound}
				\int_{\mathbb{R}^d} \left( \vert \nabla (\chi_Q u_M) \vert^2 + (V+M) \chi_Q^2 u_M^2 \right)
				& \leq \left(\sup_Q \, \frac{1}{u_M} \right) \int_{Q} \chi_Q^2 u_M^2 + \widetilde{C}^2 \mu  \int_{Q} u_M^2 \nonumber \\
				& \leq \mu \left( \int_{Q} \chi_Q^2 u_M^2 + \widetilde{C}^2\int_{Q} u_M^2 \right). 
			\end{align}
			We are left to replace the integrand of the second term by $(\chi_Q u_M)^2$. As $Q\in \mathcal{F}$ we have $\inf_Q u_M\geq 1/\mu$ and hence, by the Harnack-Moser inequality \eqref{Harnack Moser inequality} we get
			\begin{align*}
				\sup_Q u_M \leq C_{HM} \left( \inf_Q u_M + \frac{1}{\mu} \right) \leq 2C_{HM} \inf_Q u_M.
			\end{align*}
We conclude that
			\begin{equation*}
				\int_Q u_M^2 \leq \vert Q \vert \, \sup_Q \, u_M^2 \leq  \vert Q \vert \, (2C_{HM})^2 \, \inf_{Q/2} \, u_M^2 
				\leq 2^{d+2} C_{HM}^2 \int_{Q/2} u_M^2
				\leq 2^{d+2}C_{HM}^2 \int_{\mathbb{R}^d} \chi_Q^2u_M^2,
			\end{equation*}
			where the last inequality follows from the properties of $\chi_Q$. This yields 
			\begin{align*}
				n(\mu) \leq \caN^{V+M}((1+\widetilde{C}2^{d+2}C_{HM}^2) \mu).
			\end{align*}
			As $\mu \mapsto \caN^{V+M}(\mu)$ is right-continuous and any $\widetilde{C}>4$ is admissible, we obtain the claim.
		\end{proof}

\section{Kato-class potentials} \label{sect:existence}

We now turn to the case of potentials $V$ that are in the Kato-class $\mathcal{K}_d$. We say that a measurable function $V: \mathbb{R}^d \rightarrow \mathbb{R}$ is in the Kato-class $\mathcal{K}_d$ if
\begin{align} \label{KatoClass}
		\lim_{\varepsilon \rightarrow 0^+} \sup_{x\in \mathbb{R}^d} \int_{B(x,\varepsilon)} \vert x -y \vert^{-(d-2)} \vert V(y) \vert dy &=0, && d\geq 3,\notag \\ 
		\lim_{\varepsilon \rightarrow 0^+} \sup_{x\in \mathbb{R}^d} \int_{B(x,\varepsilon)} \ln(\vert x -y\vert^{-1}) \vert V(y) \vert dy &=0, && d=2, \\ 
		\sup_{x\in \mathbb{R}^d} \int_{B(x,1)} \vert V(y) \vert dy &<\infty, && d =1. \notag
\end{align}
We equip the sets $\caK_d$ with the following norms: 
\begin{align*}
	\Vert V \Vert_{\mathcal{K}_d} = \begin{cases}
		\displaystyle \sup_{x\in \mathbb{R}^d}  \int_{B(x,1)} \vert x -y \vert^{-(d-2)} \vert V(y) \vert dy, & \qquad d\geq 3, \\ 
		\displaystyle \sup_{x\in \mathbb{R}^d}  \int_{B(x,1/2)} \ln(\vert x -y \vert^{-1}) \vert V(y) \vert dy ,& \qquad d=2, \\
		\displaystyle \sup_{x\in \mathbb{R}^d}  \int_{B(x,1)} \vert V(y) \vert dy, & \qquad d =1.
	\end{cases}
\end{align*}
Furthermore, we say $V\in \caK_d^\mathrm{loc}$ if $\mathbbm{1}_{B(0,R)} V \in \caK_d$ for all $R>0$. 

We start by recalling some properties about the Green's function of $-\Delta +V$ in this particular case.

\begin{theorem}{\cite[Theorem B.7.2.]{simon1982schrodinger}} \label{thm:Greens function}
	Suppose $V_+\in \mathcal{K}_d^\mathrm{loc}, V_-\in \mathcal{K}_d$. Let $0<\alpha<d/2$ and $\mathrm{Re}(z)<\inf \mathrm{spec}(-\Delta+V)$ or $z \notin \mathrm{spec}(-\Delta +V)$, $\alpha$ integral. Then $(-\Delta+V-z)^{-\alpha}$ is an integral operator with integral kernel $G^{(\alpha)}(x,y;z)$ obeying
	\begin{enumerate}
		\item $G^{(\alpha)}(\cdot, \cdot, z)$ is continuous away from $x=y$ and bounded uniformly in each region $\{(x,y) \in \mathbb{R}^d \times \mathbb{R}^d \ : \ \vert x-y\vert \geq a \}$ for $a>0$. 
		\item We have
		\begin{equation} \label{upper bound Green fct polynom}
			\vert G^{(\alpha)}(x,y;z) \vert \leq C \vert x-y\vert^{-d+2\alpha}.
		\end{equation}
		\item For $\vert x-y\vert$ sufficiently small and $\vert x \vert < R$
		\begin{equation} \label{lower bound Green fct}
			\vert G^{(\alpha)}(x,y;z)\vert \geq \widetilde{C}_R \vert x -y \vert^{-d+2\alpha}.
		\end{equation}
		If $V\in \mathcal{K}_d$, $\widetilde{C}_R$ can be chosen independent of $R$.
		\item For $\vert x-y\vert\geq 1$ 
		\begin{equation} \label{upper bound Green fct}
			\vert G^{(\alpha)}(x,y;z) \vert \leq C_{\delta, \alpha,z} \exp(-\delta \vert x-y\vert)
		\end{equation}
		for some $\delta>0$ and if $\mathrm{Re}(z)<\inf \mathrm{spec}(-\Delta+V)$, for all $\delta$ with $\frac{1}{2} \delta^2 <\inf \mathrm{spec}(-\Delta+V)- \mathrm{Re}(z)$.
	\end{enumerate}
\end{theorem}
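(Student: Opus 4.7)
The plan is to reduce everything to the heat kernel $p_t^V(x,y)$ of $e^{-t(-\Delta+V)}$ and then recover the resolvent kernel via the subordination identity
\begin{equation*}
G^{(\alpha)}(x,y;z) = \frac{1}{\Gamma(\alpha)} \int_0^\infty t^{\alpha-1} e^{zt}\, p_t^V(x,y)\, dt,
\end{equation*}
which is valid for $\mathrm{Re}(z)$ below $\inf\mathrm{spec}(-\Delta+V)$ (and extended by contour deformation or analytic continuation to other $z$ away from the spectrum for integral $\alpha$). All four statements will then follow from a sufficiently precise two-sided control of $p_t^V(x,y)$.

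To obtain such control, I would use the Feynman--Kac representation
\begin{equation*}
p_t^V(x,y) = p_t^0(x,y)\,\mathbb{E}_{x,y}^{t}\!\left[\exp\!\left(-\int_0^t V(B_s)\,ds\right)\right],
\end{equation*}
where $p_t^0(x,y) = (4\pi t)^{-d/2} e^{-|x-y|^2/(4t)}$ and the expectation is over the Brownian bridge from $x$ to $y$. The key analytic input is Khasminskii's lemma: the Kato condition on $V_-$ (and $V_+\in\caK_d^{\mathrm{loc}}$, plus Brownian-bridge control to convert $V_+$ integrability on fixed compacts into bridge estimates) gives
\begin{equation*}
\sup_{x,y}\mathbb{E}_{x,y}^{t}\!\left[\exp\!\left(\int_0^t |V(B_s)|\,ds\right)\right]\le Ae^{Bt},
\end{equation*}
so $|p_t^V(x,y)| \le A e^{Bt}\, p_t^0(x,y)$. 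Inserting this into the subordination identity, splitting at $t \sim |x-y|^2$, and performing the elementary Gaussian integral yields (2), i.e.\ $|G^{(\alpha)}(x,y;z)| \lesssim |x-y|^{-d+2\alpha}$, while (1) (continuity and uniform boundedness off the diagonal) follows from dominated convergence applied to the same representation.

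For (3), the lower bound, I would restrict to the short-time regime $t\le \tau_0$ with $\tau_0$ small, and use the reverse Feynman--Kac estimate: by Jensen's inequality applied to the bridge and the Kato smallness,
\begin{equation*}
\mathbb{E}_{x,y}^{t}\!\left[e^{-\int_0^t V(B_s) ds}\right] \ge \exp\!\left(-\mathbb{E}_{x,y}^{t}\!\left[\int_0^t V(B_s) ds\right]\right) \ge \tfrac{1}{2},
\end{equation*}
uniformly for $|x|<R$ and $|x-y|$ small enough. Then $p_t^V(x,y)\ge \tfrac{1}{2} p_t^0(x,y)$, and integrating the short-time piece $\int_0^{c|x-y|^2}t^{\alpha-1}e^{zt}p_t^0(x,y)\,dt$ gives the desired $\tilde C_R |x-y|^{-d+2\alpha}$; when $V\in\caK_d$ (globally), the Kato seminorm used in the Khasminskii estimate is spatially uniform, so $\tilde C_R$ can be taken independent of $R$.

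The exponential decay (4) is the main obstacle because the Gaussian decay of $p_t^0$ alone is too weak at large $t$. The trick is to split the subordination integral: for $t \ge 1$, combine the free Gaussian bound $p_t^0(x,y)\le (4\pi t)^{-d/2}e^{-|x-y|^2/(4t)}$ with the factor $e^{zt}$ and optimize in $t$ via the standard inequality $\tfrac{|x-y|^2}{4t}+\beta t \ge \sqrt{\beta}\,|x-y|$, which yields $\exp(-\delta|x-y|)$ for any $\delta$ satisfying $\tfrac{1}{2}\delta^2 < \inf\mathrm{spec}(-\Delta+V)-\mathrm{Re}(z)$. The small-$t$ portion contributes at most polynomially in $|x-y|$ and is absorbed into a multiplicative constant $C_{\delta,\alpha,z}$. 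For $\mathrm{Re}(z)$ not necessarily below the spectrum but $z\notin\mathrm{spec}(-\Delta+V)$ with $\alpha$ integral, one instead uses the resolvent equation to write $G^{(\alpha)}(\cdot,\cdot;z)$ as a finite combination of iterated convolutions of $G^{(1)}(\cdot,\cdot;z_0)$ for a single $z_0$ below the spectrum, and the exponential decay is inherited, with $\delta$ adjusted to accommodate the finite sum.
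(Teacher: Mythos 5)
The paper does not prove this statement: it is imported verbatim from Simon's review \cite[Theorem B.7.2]{simon1982schrodinger}, so there is no in-paper argument to compare against. Your Feynman--Kac/subordination strategy is indeed the standard route to such Green's function bounds, but as written it contains genuine gaps, one of which is fatal to the sharp statement in item~4.

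First, the Khasminskii bound $\sup_{x,y}\mathbb{E}^t_{x,y}\bigl[\exp\bigl(\int_0^t|V(B_s)|ds\bigr)\bigr]\le Ae^{Bt}$ is false under the stated hypotheses: $V_+$ is only in $\caK_d^{\mathrm{loc}}$ and may grow arbitrarily at infinity, so no spatially uniform exponential moment for $\int_0^t V_+(B_s)\,ds$ exists. For the upper bounds this is harmless but should be stated correctly: drop $V_+\ge 0$ pointwise in the Feynman--Kac functional and apply Khasminskii to $V_-\in\caK_d$ only. The same issue bites harder in your lower bound: the Jensen step needs $\mathbb{E}^t_{x,y}\bigl[\int_0^t V_+(B_s)\,ds\bigr]$ to be small, but for merely locally Kato $V_+$ this expectation can be infinite, since the bridge visits regions where $V_+$ is huge with small but positive probability. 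You must localize first, e.g.\ restrict to the event that the bridge stays in a ball around $x$ (whose complement has probability $o(1)$ as $t\to0$) or compare with the truncated potential $\mathbbm{1}_{B(0,R+2)}V_+\in\caK_d$, and only then invoke Kato smallness; note also that for the lower bound you may discard $V_-$ rather than $V_+$.

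The main gap is the large-time part of the subordination integral, which controls both convergence and item~4. With only $p_t^V\le Ae^{Bt}p_t^0$, the constant $B$ is dictated by the Kato norm of $V_-$ and in general exceeds $-\inf\mathrm{spec}(-\Delta+V)$; consequently $\int_1^\infty t^{\alpha-1}e^{\mathrm{Re}(z)t}p_t^V(x,y)\,dt$ need not even converge for all $\mathrm{Re}(z)<\inf\mathrm{spec}(-\Delta+V)$, and when it does, optimizing $\frac{|x-y|^2}{4t}+\beta t$ only yields a rate tied to $\beta=-(\mathrm{Re}(z)+B)$, not to $\inf\mathrm{spec}(-\Delta+V)-\mathrm{Re}(z)$ as claimed---indeed $\inf\mathrm{spec}(-\Delta+V)$ never enters your estimates. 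To obtain the stated range of $\delta$ you must inject the spectral bound $\Vert e^{-tH}\Vert_{2,2}=e^{-t\Sigma}$, $\Sigma=\inf\mathrm{spec}(-\Delta+V)$: for instance write $e^{-tH}=e^{-H}e^{-(t-2)H}e^{-H}$ to get $p_t^V(x,y)\le Ce^{-\Sigma(t-2)}$, interpolate this with the Gaussian bound as $p_t^V\le\bigl[Ae^{Bt}p_t^0\bigr]^{\theta}\bigl[Ce^{-\Sigma(t-2)}\bigr]^{1-\theta}$, optimize in $t$ and let $\theta\to0$ (or run a Combes--Thomas argument directly on the resolvent). Finally, the reduction of the case $z\notin\mathrm{spec}(-\Delta+V)$, $\alpha$ integral, is not a finite combination of convolutions of $G^{(1)}(\cdot,\cdot;z_0)$: iterating the resolvent identity always leaves a remainder containing $(H-z)^{-1}$ itself, and sandwiching an operator that is merely $L^2$-bounded between exponentially decaying kernels gives boundedness of the composite kernel but no off-diagonal decay; here too a Combes--Thomas (weighted resolvent) estimate is the missing ingredient.
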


This allows us to prove next that the Green's function is nonnegative.

\begin{lemma} \label{lm:positivity}
	Under the assumptions of Proposition \ref{prop:existence} we have for $d\geq 3$ that $ G^{(1)}(x,y;-M) \geq 0$ and for all $f\in L^2(\mathbb{R}^d)$ we have that $f\geq 0$ implies $(-\Delta+V+M)^{-1}f\geq 0$.
\end{lemma}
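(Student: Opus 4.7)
The plan is to establish positivity preservation at the level of the heat semigroup via the Feynman--Kac formula and then transfer it to the resolvent by Laplace transform. Since $V_+\in\caK_d^\mathrm{loc}$ and $V_-\in\caK_d$, the Feynman--Kac representation
\begin{equation*}
\bigl(e^{-t(-\Delta+V+M)}f\bigr)(x) = e^{-tM}\,\bbE_x\!\left[e^{-\int_0^t V(B_s)\,ds}\,f(B_t)\right]
\end{equation*}
is valid on $L^2(\bbR^d)$; this is the content of Simon's analysis of Schr\"odinger semigroups for Kato potentials (see \cite[Theorem A.2.7]{simon1982schrodinger} and the surrounding discussion). As the exponential weight and the Wiener measure are manifestly nonnegative, the right-hand side is $\geq 0$ whenever $f\geq 0$. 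Thus $e^{-t(-\Delta+V+M)}$ is positivity-preserving for every $t\geq 0$.

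Since $M>-E_0$, the operator $-\Delta+V+M$ is strictly positive and self-adjoint, so its resolvent admits the Laplace representation
\begin{equation*}
(-\Delta+V+M)^{-1} f = \int_0^\infty e^{-t(-\Delta+V+M)} f\,dt,
\end{equation*}
valid in $L^2(\bbR^d)$ (the integral converging in norm because the spectrum lies in $[E_0+M,\infty)$ with $E_0+M>0$). As each integrand is positivity-preserving, so is the limit: $f\geq 0 \Rightarrow (-\Delta+V+M)^{-1}f\geq 0$. This yields the second claim.

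For the kernel statement, I would use the fact, recorded in Theorem~\ref{thm:Greens function}(1), that $G^{(1)}(\cdot,\cdot;-M)$ is continuous away from the diagonal. Suppose for contradiction that $G^{(1)}(x_0,y_0;-M)<0$ for some $x_0\neq y_0$. By continuity there would be disjoint open balls $U\ni x_0$ and $W\ni y_0$ (with $\overline{U}\cap\overline{W}=\emptyset$) on which $G^{(1)}(x,y;-M)<0$ uniformly. Choosing $f=\mathbbm{1}_W\geq 0$ and integrating against $\mathbbm{1}_U$ would give
\begin{equation*}
\int_U \bigl((-\Delta+V+M)^{-1} f\bigr)(x)\,dx = \int_U\!\int_W G^{(1)}(x,y;-M)\,dy\,dx < 0,
\end{equation*}
contradicting the positivity preservation just established. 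Hence $G^{(1)}(x,y;-M)\geq 0$ off the diagonal, which is the full statement (the diagonal is a null set).

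The only genuine obstacle is the justification of the Feynman--Kac formula for $V\in\caK_d^\mathrm{loc}+\caK_d$ and the attendant semigroup theory; this is however classical and immediately available from \cite{simon1982schrodinger}. Everything else is a soft application of the Laplace transform and a continuity argument on the kernel.
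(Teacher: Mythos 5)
Your argument is correct, but it takes a different route from the paper. You establish positivity at the level of the semigroup via the Feynman--Kac formula for Kato-class potentials and then transfer it to the resolvent by the Laplace representation $(-\Delta+V+M)^{-1}=\int_0^\infty e^{-t(-\Delta+V+M)}\,dt$, which converges in norm since $\mathrm{spec}(-\Delta+V+M)\subseteq[E_0+M,\infty)$ with $E_0+M>0$. The paper instead stays entirely at the resolvent level: it starts from the explicit positive kernel of the free resolvent $(-\Delta+t)^{-1}$ and then invokes Perelmuter's form-perturbation lemmas (applied separately to $V_+$ and $V_-$, using that $V_-$ is form-bounded with relative bound zero and that $\mathcal{D}((-\Delta)^{1/2})\cap\mathcal{D}(V_+^{1/2})$ is dense) to conclude that positivity preservation survives the perturbation. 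Both proofs share the final step, which you carry out in more detail than the paper does: continuity of $G^{(1)}(\cdot,\cdot;-M)$ away from the diagonal (Theorem~\ref{thm:Greens function}) reduces nonnegativity of the kernel to positivity preservation of the resolvent, via a bump-function (or indicator-function) contradiction. What your route buys is a conceptually transparent, probabilistically standard argument; its cost is that you must import the Feynman--Kac representation for the form sum with $V_+\in\mathcal{K}_d^{\mathrm{loc}}$, $V_-\in\mathcal{K}_d$, which is indeed in Simon's paper but not at the location you cite (Theorem A.2.7 there is the semiboundedness statement; the Feynman--Kac formula for Kato-class potentials appears elsewhere in that work), so the reference should be adjusted. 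The paper's route avoids probabilistic machinery altogether and needs only soft form-perturbation facts, at the price of leaning on Perelmuter's somewhat less familiar lemmas. Both are complete; yours is a legitimate alternative.
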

\begin{proof}
	By Theorem \ref{thm:Greens function} we know that $G^{(1)}(\cdot, \cdot; -M)$ is continuous away from the diagonal. Hence, it is enough to prove that $(-\Delta+V+M)^{-1}$ maps nonnegative $L^2$ functions to nonnegative $L^2$ functions (otherwise consider a positive bump function localize where the Green's function is negative to get a contradiction).
	
	We note that for all $t>0$ we have that $(-\Delta + t)^{-1}$ maps nonnegative $L^2$ functions to nonnegative $L^2$ functions, which can easily be seen from the fact that the corresponding explicit integral kernel is positive. As $V_-$ is form bounded by the Laplacian with relative bound zero (see \cite[Proposition A.2.3]{simon1982schrodinger}) and $\mathcal{D}((-\Delta)^{1/2}) \cap \mathcal{D}(V_+^{1/2})$ is dense in $L^2(\mathbb{R}^d)$ (as $C_c^\infty(\mathbb{R}^d)$ is contained in this intersection), we get by \cite[Lemma 2.1, Lemma 2.2]{perelmuter1981positivity} applied to $V_+$ and $V_-$ that $(-\Delta+V+M)^{-1}$ maps nonnegative $L^2$ functions to nonnegative $L^2$ functions.
\end{proof}

\begin{proof}[Proof of Proposition \ref{prop:existence}]
	We start with the case $d\geq 3$. In the end we will use Hadamard's method of descent to obtain all the properties for $d\in \{1,2\}$ from the case $d=3$.
	
	By assumption $\inf \mathrm{spec}(-\Delta +V+M)>0$ and thus, the Green's function $G_M(x,y):=G^{(1)}(x,y;-M)$ satisfies the properties of Theorem \ref{thm:Greens function}. We define
	\begin{align*}
		u_M(x) = \int_{\mathbb{R}^d} G_M(x,y) dy
	\end{align*}
	and will check that this function satisfies indeed all the properties of Proposition \ref{prop:existence}.
	
	The boundedness of $u_M$ follows immediately from~\eqref{upper bound Green fct polynom} and~\eqref{upper bound Green fct}. In order to check that $u_M$ is a weak solution of \eqref{landscape1} we follow the proof of \cite[Proposition 3.3]{bachmann2023counting} and set $u_{M,L}$ to be the solutions of
	\begin{equation} \label{approximate solutions}
		(-\Delta+V+M) u_{M,L} = \mathbbm{1}_{B(0,L)}. 
	\end{equation}
	Since $\mathbbm{1}_{B(0,L)}$ is in $L^2(\mathbb{R}^d)$, Theorem~\ref{thm:Greens function} immediately implies that	
	\begin{align*}
		u_{M,L}(x) = \int_{\mathbb{R}^d} G_M(x,y) \mathbbm{1}_{B(0,L)}(y) dy.
	\end{align*}
	Hence, by dominated convergence we have $\lim_{L\rightarrow \infty} u_{M,L}(x)=u_M(x)$. As $u_{M,L}$ is a weak solution of \eqref{approximate solutions}, one readily checks that $u_M$ is a solution of \eqref{eq:weak form}. The fact that $1/u_M\in L^\infty_\mathrm{loc}(\mathbb{R}^d)$. This follows from the lower bound \eqref{lower bound Green fct} together with the fact that the Green's function $G_M$ is nonnegative by Lemma~\ref{lm:positivity}. As $1/u_M \in L_\mathrm{loc}^\infty(\mathbb{R}^d)$ we get $V+M+1/u_M \in \caK_d^\mathrm{loc}$. The function $u_M$ is nonnegative and is a distributional solution of $(-\Delta +V+M+1/u_M) u_M=0$. Thus, \cite[Theorem 1.5]{AizenmanSimon} yields that $u_M$ is continuous. Finally, $u_M\in H_\mathrm{loc}^1(\mathbb{R}^d)$ follows from a Caccioppoli-type inequality (see \cite[Lemma 1.1]{chiarenza1986harnack}).
	
The cases $d\in \{1,2\}$ will now be covered by Hadamard's method of descent. Let's start with $d=2$. We define $\widetilde{V}(y_1, y_2, y_3) = V(y_1, y_2)$ and check that $\widetilde{V}$ satisfies the conditions of Proposition \ref{prop:existence} for $d=3$. Once we have done that, we can use the same proof as in \cite[page 10]{bachmann2023counting} to conclude. Similarly, one deals with the case $d=1$.
	
	We first check that if $-\Delta_{\mathbb{R}^d} + V \geq M$, then we also have $-\Delta_{\mathbb{R}^3}+ \widetilde{V} \geq M$. To see this for $d=2$, we take a Fourier transform in the variables and note that $-\Delta_{\mathbb{R}^3} +\widetilde{V}$ is unitarily equivalent to the direct integral
	\begin{align*}
		\int_{\mathbb{R}}^\oplus (p^2+(-\Delta_{\mathbb{R}^2}+V)) dp.
	\end{align*}
	Thus, by \cite[Theorem XIII.85]{reed1978iv} we have $\text{spec}(-\Delta_{\mathbb{R}^3} + \widetilde{V}) = [\min \text{spec}(-\Delta_{\mathbb{R}^2} +V), \infty)$ and hence, establishes our first property.
	
	We are left to show that for $d\in \{1,2\}$ if $V\in \mathcal{K}_d$ (respectively $\mathcal{K}_d^\mathrm{loc}$), then $\widetilde{V} \in \mathcal{K}_3$ (respectively $\mathcal{K}_3^\mathrm{loc}$). Let $d=2$, $x\in \mathbb{R}^3$ and $0<\varepsilon\leq 1$. We compute
	
	\begin{align*}
		\int_{B(x,\varepsilon)} \frac{\vert \widetilde{V}(y)}{\vert x-y \vert} dy
		&\leq 2\int_{B((0,0), \varepsilon)} \vert V(z_1+x_1, z_2+x_2) \vert \left( \frac{\pi}{2}+ \ln \left(\frac{\varepsilon}{\sqrt{z_1^2+z_2^2}}\right)_+ \right) dz_2 dz_3 \\
		&\leq (\pi +2) \int_{B((x_1,x_2),\varepsilon)} \frac{\vert V(z_1, z_2)\vert}{\vert (x_1, x_2)- (z_1, z_1) \vert} dz_1 dz_2.
	\end{align*}
	
	For $d=1$ we define $\widetilde{V}(y_1, y_2) = V(y_1)$. We compute for $x\in \mathbb{R}^2$ and $0<\varepsilon\leq 1$
	\begin{align*}
		\int_{B(x,\varepsilon)} \vert \widetilde{V}(y)\vert \ln(\vert x -y \vert^{-1}) dy
		&\leq \int_{B(0,\varepsilon)} \vert V(x_1+y_1) \vert \left( \int_{-1}^1 \vert \ln(\vert y_2 \vert) \vert dy_2 \right) dy_1 \\
		&\leq 2\sup_{x_1\in \mathbb{R}} \int_{B(x_1, \varepsilon)} \vert V(y_1) \vert dy_1.
	\end{align*}
	
	Let $d=2$, let $\widetilde{V}(y_1, y_2, y_3) = V(y_1, y_2)$ and let $\widetilde{u}_M$ be the solution of 
	\begin{align*}
		(-\Delta_{\mathbb{R}^3}+ \widetilde{V}+M) \widetilde{u}_M = 1
	\end{align*}
	which we obtained in the case $d=3$. One readily checks that $\widetilde{v_{M,L}}(x,t)=\widetilde{u_{M,L}}(x,t+\alpha)$ is a weak solution of $(-\Delta +\widetilde{V}+M) \widetilde{v_{M,L}} = \mathbbm{1}_{B((0,0,\alpha),L)}$. Thus, we have
	\begin{align*}
		\widetilde{u_{M,L}}(x,t+\alpha) = \int_{\mathbb{R}^3} G_M((x,t), y) \mathbbm{1}_{B((0,0,\alpha),L)}(y) dy.
	\end{align*}
	By dominated convergence we get
	\begin{align*}
		\widetilde{u_M}(x,t+\alpha) = \lim_{L\rightarrow \infty} \widetilde{u_{M,L}}(x, t+\alpha) = \int_{\mathbb{R}^3} G_M((x,t), y) dy
		= \widetilde{u_M}(x,t).  
	\end{align*}
	We define $u_M(x) = \widetilde{u_M}(x,0)$. One readily checks that $u_M$ inherits all the desired properties from $\widetilde{u_M}$. The same strategy works for $d=1$.

Finally, we consider the case where $V_+\in \mathcal{K}_d$. In this case the lower bound \eqref{lower bound Green fct} holds true for all $x$ and hence $\inf_{\mathbb{R}^d} u_M>0$. This implies in turn that $0<A_M<\infty$ since we have already established that $\sup_{\mathbb{R}^d}u_M<\infty$.
\end{proof}

Next we turn to the proof of Theorem \ref{thm:Kato}.

\begin{proof}[Proof of Theorem \ref{thm:Kato}]
(i) We adapt the proof in Lemma~\ref{lma:LowerBound} to the potential $\frac{1}{u_M}-M$ and define
		\begin{align*}
			\mathcal{F} = \left\{ Q\in \mathcal{Q}_{\vert C_c\mu \vert^{-1/2}} \ : \ \sup_Q \left( \frac{1}{u_M}-M \right)\leq c\mu \right\}
		\end{align*}
Similarly, we pick a bump function $\chi_Q$ as before, but with $\Vert \nabla \chi_Q \Vert_{L^\infty(\mathbb{R}^d)}\leq 5\vert C_c \mu \vert^{1/2}$ and we let
		\begin{align*}
			\mathcal{H} = \mathrm{span}\left\{ \chi_Q u_M \ : \ Q\in \mathcal{F} \right\},
		\end{align*}
for which $\mathrm{dim}(\mathcal{H})=\vert \mathcal{F} \vert = n_c(\mu)$ holds. With this, (\ref{chi u Identity},\ref{chi u UpperBound}) become
		\begin{align*}
			\int_{\mathbb{R}^d} \left( \vert \nabla(\chi_Q u_M) \vert^2 + V(\chi_Q u_M)^2 \right)
			&= \int_{Q} \vert \nabla \chi_Q \vert^2 u_M^2 + \int_Q \left( \frac{1}{u_M}-M \right) (\chi_Q u_M)^2 \\
			&\leq  5^2 C_c \vert \mu \vert \int_Q u_M^2+ c\mu \int_{\mathbb{R}^d} (\chi_Q u_M)^2. 
		\end{align*}
Here, we use
\begin{equation*}
\int_Q u_M^2 \leq \vert Q\vert \sup_{Q} u_M^2 
\leq \vert Q\vert A_M^2 \inf_{\bbR^d} u_M^2
\leq \vert Q\vert A_M^2 \inf_{Q/2} u_M^2
\leq 2^d(A_M)^2 \int_Q (\chi_Q u_M)^2
\end{equation*}
to conclude that
		\begin{align*}
			\int_{\mathbb{R}^d} \left( \vert \nabla(\chi_Q u_M) \vert^2 + V(\chi_Q u_M)^2 \right) \leq \left(c - 2^d( 5 A_M)^2 C_c \right) \mu \int_{\mathbb{R}^d} (\chi_Q u_M)^2.	
		\end{align*}
		Recalling that $C_c= \frac{c-1}{2^d( 5 A_M)^2}$ we obtain that $\mathcal{N}^V(\mu)\geq \vert \mathcal{F}\vert=n_c(\mu)$. 
		
(ii) With the additional assumption~\eqref{scale invariant Harnack}, we have that
\begin{equation*}
n_c(\mu) \geq \left\vert \left\{ Q\in \mathcal{Q}_{(C_c\vert \mu \vert)^{-1/2}} \ : \ \inf_Q \left(\frac{1}{u_M}-M\right)\leq \frac{c\mu}{\widetilde{C}_H} \right\} \right\vert.
\end{equation*}
Just as in the proof of Lemma~\ref{lm:comparing}, this is now lower bounded by the measure of the sublevel set.

(iii) Let $\mu\in \mathbb{R}$ and $\varepsilon>0$. For every $v$ in the form domain of $-\Delta +V$ we have by~\eqref{conjugation}
		\begin{align*}
			\int_{\mathbb{R}^d} &\left( \vert \nabla v(x) \vert^2+(V(x)-\mu-\varepsilon) \vert v(x)\vert^2 \right) dx \\
			&= \int_{\mathbb{R}^d} \left( \left\vert \nabla \left(\frac{v(x)}{u_M(x)}\right)\right\vert^2 + \left( \frac{1}{u_M(x)}-M-\mu-\varepsilon \right)\left\vert \frac{v(x)}{u_M(x)}\right\vert^2 \right) u_M(x)^2 dx \\
			&\geq \left( \inf_{\mathbb{R}^d} u_M \right)^2 \int_{\mathbb{R}^d} \left(\left\vert \nabla \left( \frac{v(x)}{u_M(x)}\right)\right\vert^2 - A_M^2 \left(\frac{1}{u_M(x)}-M-\mu-\varepsilon\right)_- \left\vert \frac{v(x)}{u_M(x)}\right\vert^2  \right) dx.
		\end{align*}
		If the integral on the RHS of \eqref{CLR improved} is infinite, then there is nothing to prove. Otherwise, we know from the standard CLR bound~\cite{cwikel1977weak} that
		\begin{equation}\label{CLR 1/u}
			N:=\caN^{-A_M^2(1/u_M-M-\mu-\varepsilon)_-}\left( 0 \right) \leq C_\mathrm{CLR} A_M^d \int_{\mathbb{R}^d}  \left( \frac{1}{u_M(x)}-M-\mu-\varepsilon \right)_-^{d/2} dx
		\end{equation}
		and therefore, there exist linearly independent $\varphi_1, \dots, \varphi_N \in L^2(\mathbb{R}^d)$  such that for $0\neq w\in \{ f\in H^1(\mathbb{R}^d) \ \vert \ \forall j\in \{1, \dots, N \} \ : \ \langle f, \varphi_j \rangle_{L^2(\mathbb{R}^d)} =0 \}$ we have
		\begin{align*}
			\int_{\mathbb{R}^d} \left(\left\vert \nabla w(x)\right\vert^2 - A_M^2 \left(\frac{1}{u_M(x)}-M-\mu-\varepsilon\right)_- \left\vert w(x)\right\vert^2  \right) dx > 0.
		\end{align*} 
		Next, we define the codimension $N$ space
		\begin{align*}
			\caK=\left\{ v\in \mathrm{dom}((-\Delta+V)^{1/2}) \ \vert \ \forall j\in \{1, \dots, N\} \ : \ \langle v, \varphi_j u_M \rangle_{L^2(\mathbb{R}^d)} =0  \right\}.
		\end{align*}
	For all $0\neq v\in \caK \cap C_c^\infty(\mathbb{R}^d)$ we have by the above that
		\begin{align*}
			\int_{\mathbb{R}^d} &\left( \vert \nabla v(x) \vert^2+(V(x)-\mu-\varepsilon) \vert v(x)\vert^2 \right) dx> 0.
		\end{align*}
		As $\caK\cap C_c^\infty(\mathbb{R}^d)$ is dense $\caK$, we conclude that
		\begin{align*}
			\int_{\mathbb{R}^d} &\left( \vert \nabla v(x) \vert^2+V(x) \vert v(x)\vert^2 \right) dx  
			>\mu\int_{\mathbb{R}^d} \vert v(x)\vert^2 d
		\end{align*} 
		for all non-zero $v\in\caK$. Thus, by the min-max principle we get that $\caN^{V}( \mu)\leq N$. The claim follows from~(\ref{CLR 1/u}).
\end{proof}


\section{The ground state energy} \label{sect:groundstate}

In this Section we will establish the relationship between the bottom of the spectrum of $E_0 = \inf\mathrm{spec}(-\Delta +V)$ and the infimum of the  shifted landscape function. We follow~\cite[Lemma 5]{berg2022efficiency}, where a similar result is proved for the torsion function in the case of nonnegative potentials.

\begin{proof}[Proof of Proposition \ref{prop:groundstate}]
Let $-\Delta_L$ denote the Dirichlet Laplacian on $B(0,L)$ and let $H_L$ be the operator $-\Delta_L + V+M$ on $L^2(B(0,L))$. Note that $\sigma_L:=\inf \mathrm{spec}(H_L) \geq \inf \mathrm{spec}(-\Delta +V+M)>0$. Thus
	\begin{align*}
		v_L = H_L^{-1} \mathbbm{1}_{B(0,L)}
	\end{align*}
is well-defined. We remark that $v_L$ is nonnegative (integration by parts yields that the Dirichlet Laplacian satisfies the Beurling-Deny condition \cite[Theorem XIII.50]{reed1978iv} and hence $e^{t\Delta_L}$ is positivity-preserving for $t>0$ which implies via functional calculus that $H_L^{-1}$ is positivity-preserving \cite[Theorem XIII.44]{reed1978iv}).
	
	Kato-class potentials are infinitesimally form-bounded by the Dirichlet Laplacian (use \cite[Theorem 4.7]{AizenmanSimon} to get that $V$ is is infinitesimally form-bounded by $-\Delta$ on $\mathbb{R}^d$, which implies that the same holds true on the ball) and as the Dirichlet Laplacian has compact resolvent (\cite[Theorem XIII.73 and its corollary]{reed1978iv}), so does $H_L$ (by \cite[Theorem XIII.68]{reed1978iv}). In particular, $\sigma_L$ is an eigenvalue. Let $\varphi_L$ be an eigenfunction of $H_L$ with eigenvalue $\sigma_L$. We can choose $\varphi_L>0$ (see \cite[Theorem 11.8]{lieb2001analysis}). Furthermore, as $\varphi_L$ is square-integrable and $B(0,L)$ has finite Lebesgue measure, we get that $\varphi_L$ is integrable too. Thus, we get
	\begin{align*}
		\int_{B(0,L)} \varphi_L 
		= \int_{B(0,L)} \varphi_L (H_L v_L)
		= \sigma_L \int_{B(0,L)} \varphi_L v_L
		\leq \sigma_L \Vert v_L \Vert_{L^\infty(\bbR^d)} \int_{B(0,L)} \varphi_L.
	\end{align*}
	This implies
	\begin{align*}
		1 \leq \sigma_L \Vert v_L \Vert_{L^\infty(\bbR^d)}.
	\end{align*}
	Note that $(-\Delta+V+M)(u_M-v_L)=0$ on $B(0,L)$ and $u_M-v_L=u>0$ on $\partial B(0,L)$. Thus, by the weak maximum principle (see \cite[Theorem 8.1]{gilbarg1977elliptic}) we have $u_M>v_L$ on $B(0,L)$ and therefore,
	\begin{equation} \label{eq:lowerbndgroundstate}
		1\leq \sigma_L \Vert u_M \Vert_{L^\infty(\bbR^d)}.
	\end{equation}
	Note that $\mathrm{dom}(H_L^{1/2})\subseteq \mathrm{dom}(H_K^{1/2}) \subseteq \mathrm{dom}((-\Delta+V+M)^{1/2})$ for all $K\geq L\geq 0$. Thus, $L\mapsto\sigma_L$ is a decreasing function and so
	\begin{align*}
		\lim_{L\rightarrow \infty} \sigma_L \geq \inf \mathrm{spec}(-\Delta+V+M).
	\end{align*}
	 Since $C_c^\infty(\mathbb{R}^d)$ is dense in the form domain of $-\Delta +V+M$ (see \cite[Theorem 8.2.1]{davies1995spectral}, there is a sequence $(\xi_n)_{n\in \mathbb{N}} \subseteq C_c^\infty(\mathbb{R}^d)$, normalized in $L^2(\mathbb{R}^d)$, such that
	 \begin{align*}
	 	\lim_{n\rightarrow \infty} \langle \xi_n, H \xi_n \rangle_{L^2(\mathbb{R}^d)} = \inf \mathrm{spec}(-\Delta+V+M).
	 \end{align*}
 	Now, for every $n$ there exists $L_n$ such that $\xi_n \in \mathrm{dom}(H_{L_n}^{1/2})$ and so $\langle \xi_n, H \xi_n \rangle_{L^2(\mathbb{R}^d)} \geq \sigma_{L_n}$. It follows that
 	\begin{align*}
 		\lim_{L\rightarrow \infty} \sigma_L = \inf \mathrm{spec}(-\Delta+V+M).
 	\end{align*}
 	Taking the limit $L\rightarrow \infty$ in \eqref{eq:lowerbndgroundstate} yields \eqref{eq:ground state}.
\end{proof}

We note that for nonnegative potentials, one can use the same proof as in \cite[Theorem 1]{van2009hardy}, respectively \cite[Lemma 5]{berg2022efficiency} to get the upper bound
\begin{equation*}
E_0\leq (4+d\log(2)) \inf_{\mathbb{R}^d} \frac{1}{u_M} - M.
\end{equation*}
Under additional assumption on the heat kernel of $-\Delta+V$ the dimensional constant can be further improved \cite[Theorem 1.5]{vogt2019estimates}.

The difference between the bottom of the spectrum of $-\Delta+V$ and the infimum of $1/u_M-M$ may a priori be large. However, the bound just proven allows us to set up the iteration procedure described in Corollary \ref{cor:iteration}  which we now show to converges to the bottom of the spectrum.

\begin{proof}[Proof of Corollary \ref{cor:iteration}]
	By \eqref{eq:ground state} we have that the function 
	\begin{align*}
		F: \quad(-E_0, \infty) &\rightarrow (-E_0, \infty) \\ M &\mapsto M - \inf_{\mathbb{R}^d} \frac{1}{u_M}
	\end{align*}
	is well defined. Furthermore, $u_M$ is positive and bounded above and so 
	\begin{align*}
		F(M) = M - \inf_{\mathbb{R}^d} \frac{1}{u_M} < M.
	\end{align*}
	This readily implies that the sequence $\{M_n\}_n$ is convergent with $M_\infty = \lim_{n\rightarrow \infty} M_n \geq -E_0$. We claim that $F$ is continuous, which implies that $F(M_\infty) = M_\infty$. If $M_\infty>-E_0$, then it is in the domain of $F$ and so $F(M_\infty)<M_\infty$, which is a contradiction. The continuity of $F$ follows from $u_{M+\delta} \leq u_M$ for all $\delta \geq 0$ (see \cite[Lemma 2.1]{perelmuter1981positivity}) and the fact that $\lim_{\delta \rightarrow 0^+} u_{M+\delta}(x) = u_M(x)$ for all $x\in \mathbb{R}^d$ (using \cite[Theorem 1.3]{perelmuter1981positivity} and the continuity of the shifted landscape functions).
	
Finally,
	\begin{align*}
		-E_0 < M_n - \inf_{\mathbb{R}^d} \frac{1}{u_{M_n}} < M_n
	\end{align*}
for	all $n$. Since $M_n\to -E_0$, we immediately conclude that $\lim_{n\to\infty}\inf_{\mathbb{R}^d} \frac{1}{u_{M_n}} = 0$, which implies~\eqref{eq:blowup}.
\end{proof}


\section{Examples} \label{sect:examples}

Since the constants that appear in the previous sections all depend on either Harnack constants $C_{HM}$ or $A_M$, it is a priori unclear how sharp the estimates in terms of the landscape function really are. In this section, we provide a partial answer to that question by considering special cases. First of all, we show that for radially symmetric potentials of the form $-\vert v\vert^{-\rho}$ (in particular the Coulomb potential) whose eigenvalues accumulate at the bottom of the essential spectrum $\mu = 0$, the sharp constant in the bound~(\ref{CLR improved}) can be computed so that it becomes asymptotically exact. Secondly, we turn to the other end of the discrete spectrum and analyze the first few eigenvalues appearing in case of the spherical well potential $-\varepsilon \mathbbm{1}_{\mathbb{R}^d \setminus B(0,\delta)}$ for $\varepsilon\ll1$: The landscape function is explicit and its minimum approximates the ground state energy to order $\varepsilon$.

\subsection{Asymptotics at the bottom of the essential spectrum}

The asymptotic for power law potentials is known and we first recall a general result~\cite{brownell1961asymptotic} in $d\geq 3$; the result extends to lower dimensions with appropriate assumptions. For other variants see also \cite[Theorem XIII.82]{reed1978iv} or (with an oscillatory prefactor) \cite[Theorem 3.1, Theorem 3.2]{raikov2016discrete}.

\begin{theorem}  \label{thm:BrownellClark}
	Let $d\geq3$.   Let $1<R<\infty$ and $V: \mathbb{R}^d \rightarrow \mathbb{R}$ such that $V$ is $C^1$ for $\vert x \vert \geq R$ with $\lim_{\vert x \vert \rightarrow \infty} V(x)=0$. There exists $q: \mathbb{R} \rightarrow (-\infty, 0]$ such that 
		\begin{equation} \label{radial conditon}
			V(x) = q(\vert x \vert) + O(\vert x \vert^{-2})\quad\text{as}\quad \vert x \vert \rightarrow \infty.
		\end{equation} 
		Moreover, $q$ is $C^5$ for $r \geq R$ and
		\begin{equation} \label{derivative condition basic}
			M_1 r^{-1} \leq \vert q^{(k)}(r)/q^{(k-1)}(r) \vert \leq M_2 r^{-1} \quad (k\leq 5)
		\end{equation}
for all $r\geq R$ and for some $0<M_1, M_2 <\infty$. Finally, we assume that there exists $\alpha >0$ such that $d+\alpha \geq 2$ and such that 
	\begin{equation} \label{integrability condition}
		\int_{\vert x \vert \leq R} \vert V(x) \vert^{(d+\alpha)/2} dx < \infty.
	\end{equation}
	Then, as $\mu \rightarrow 0^-$
	\begin{equation}
		\caN^{V}(\mu) = (1+o(1)) \left[ (2\sqrt{\pi})^d \Gamma\left( \frac{d}{2}+1 \right) \right]^{-1} \int_{\mathbb{R}^d} \left( \mu - V(x) \right)_+^{d/2} dx.
	\end{equation}
\end{theorem}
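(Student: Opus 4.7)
The plan is to establish this Weyl-type asymptotic by Dirichlet-Neumann bracketing, the standard route to semiclassical eigenvalue accumulation results at the bottom of the essential spectrum. First, I would separate $\bbR^d$ into an interior $B(0,R')$ for some fixed $R'\geq R$ and its complement. The interior contributes only finitely many eigenvalues uniformly in $\mu<0$: the integrability condition~\eqref{integrability condition} together with a standard CLR bound applied to $-\Delta + \mathbbm{1}_{B(0,R')}V$ produces a $\mu$-independent upper bound on the negative eigenvalue count, so these eigenvalues will be absorbed into the $o(1)$ error once the RHS is shown to diverge as $\mu\to 0^-$.

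For the exterior I would tile $\{|x|\geq R'\}$ by cubes $\{Q_j\}_j$ whose side lengths $\ell_j$ depend on the radial coordinate $r_j=\mathrm{dist}(0,Q_j)$ via $\ell_j=\epsilon_\mu r_j$, with $\epsilon_\mu\to 0$ tending to zero sufficiently slowly as $\mu\to 0^-$. The derivative bound~\eqref{derivative condition basic} then ensures $q$ varies by only $O(\epsilon_\mu)$ relative to $|q(r_j)|$ on $Q_j$, and the $O(|x|^{-2})$ non-radial remainder in~\eqref{radial conditon} is negligible compared to $|V_j|$ whenever $|q(r)|\gg r^{-2}$, which holds throughout the classically allowed region under the decay assumption. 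Dirichlet-Neumann bracketing yields
\begin{equation*}
\sum_j \caN_{Q_j}^{D}\!\left(\mu-\sup_{Q_j}V\right) \leq \caN^V(\mu) - O(1) \leq \sum_j \caN_{Q_j}^{N}\!\left(\mu-\inf_{Q_j}V\right),
\end{equation*}
where $\caN_{Q_j}^{D/N}(\lambda)$ counts the eigenvalues below $\lambda$ of the Dirichlet, respectively Neumann, Laplacian on $Q_j$.

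On each cube the classical Weyl law gives, uniformly in the semiclassical parameter $\ell_j\sqrt{(\mu-V_j)_+}\to\infty$,
\begin{equation*}
\caN_{Q_j}^{D/N}(\mu-V_j) = (2\pi)^{-d}\omega_d |Q_j|(\mu-V_j)_+^{d/2}(1+o(1)),
\end{equation*}
with $\omega_d=\pi^{d/2}/\Gamma(d/2+1)$, so that $(2\pi)^{-d}\omega_d=[(2\sqrt\pi)^d\Gamma(d/2+1)]^{-1}$ matches the claimed prefactor. Summing over $j$ produces a Riemann sum which, by the slow variation of $V$ within each cube, converges to $\int_{\bbR^d}(\mu-V(x))_+^{d/2}dx$ up to a relative $o(1)$.

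The main obstacle is the simultaneous choice of $\epsilon_\mu$ and $R'$ so that all error terms are of lower order than the diverging RHS: one needs $\epsilon_\mu\to 0$ for the variation of $V$ on each $Q_j$ to be negligible, yet $\ell_j\sqrt{(\mu-V_j)_+}=\epsilon_\mu r_j\sqrt{(\mu-V_j)_+}\to\infty$ for the cube-wise Weyl law to hold uniformly over the bulk of the classically allowed region. This trade-off is feasible precisely because $r\sqrt{|q(r)|}\to\infty$ along the boundary of that region under the combined hypotheses on $q$. Handling the thin boundary layer where $(\mu-V_j)_+$ is small by a separate crude upper bound, and verifying that the remaining error scales (the $O(|x|^{-2})$ correction, Dirichlet-Neumann boundary contributions, and the intra-cube variation of $V$) add up to $o\!\left(\int(\mu-V)_+^{d/2}dx\right)$, is where the quantitative use of~\eqref{derivative condition basic} is most delicate and where the bulk of the technical work lies.
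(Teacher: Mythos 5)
The paper does not prove Theorem~\ref{thm:BrownellClark}: it is stated as a known result and attributed to Brownell~\cite{brownell1961asymptotic} (``we first recall a general result\ldots''), so there is no in-paper proof to compare against. I will therefore assess your sketch on its own merits.

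Your route via Dirichlet--Neumann bracketing on radially graded cubes is the standard modern way to obtain Weyl-type accumulation asymptotics at the bottom of the essential spectrum, and the skeleton you lay out -- splitting off a compact core, tiling the exterior with cubes of side $\ell_j=\epsilon_\mu r_j$, applying Weyl's law per cube, and Riemann-summing -- is coherent at the level of a plan. Two remarks on completeness. First, note that the very specific hypotheses of the theorem ($q\in C^5$ together with the two-sided ratio bounds $M_1 r^{-1}\le|q^{(k)}/q^{(k-1)}|\le M_2 r^{-1}$ up to order five) are not naturally consumed by a bracketing argument, which only needs slow variation of $V$ at scale $\ell_j$, i.e.\ a first-derivative bound. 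That mismatch strongly suggests the original proof proceeds instead by separation into spherical harmonics and a uniform WKB analysis of the resulting family of one-dimensional radial problems, for which the higher-derivative ratio bounds are exactly what controls the WKB remainders uniformly in the angular momentum and in $\mu$. So your approach is genuinely different and, if carried out, would prove a version of the theorem under weaker regularity on $q$ but would require its own uniform-Weyl machinery. Second, the place you flag as ``where the bulk of the technical work lies'' is indeed a real gap in the sketch: the per-cube Weyl law is only asymptotically exact when $\ell_j\sqrt{(\mu-V_j)_+}\to\infty$, which fails in the turning-point shell where $\mu - q(r)$ is small; you need to show that the total contribution of this shell (in both the Dirichlet and Neumann directions) is $o\big(\int(\mu-V)_+^{d/2}\big)$, and for $q(r)\sim -r^{-\rho}$ with $\rho$ close to $2$ this requires a quantitative choice of $\epsilon_\mu$, the shell width, and a Lieb--Thirring or CLR bound on the shell. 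As written, the proposal identifies the obstacle but does not resolve it, so it is an outline rather than a proof.
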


		\begin{proof}[Proof of Proposition \ref{prop:powerlaw}]
Our goal is to establish pointwise estimates for $u_M$ that match the assumptions~(\ref{radial conditon},\ref{derivative condition basic}). In particular, we will construct  a radial function $q$ such that $\vert u_M(x) - q(\vert x \vert)\vert \leq C_M/(1+\vert x \vert^2)$.

	By standard elliptic theory, we know that $u_M \in C^\infty(\mathbb{R}^d\setminus \{0\})$. The strategy is to compute successive approximations of $u_M$ in terms of powers of the potential. We first note that
	\begin{align*}
		(-\Delta + V +M) (u_M-\frac{1}{M}+\frac{V}{M^2}) = -\frac{\Delta V}{M^2} + \frac{V^2}{M^2},
	\end{align*}
	where $\Delta V$ decays fast, while $V^2$ does not. This formula suggests the correct expansion away from the origin where the potential is singular. Let $N\in \mathbb{N}$ be such that $N\rho <2 \leq (N+1)\rho$. Then	
	\begin{equation} \label{formal}
		(-\Delta + V +M) \bigg(u_M-\sum_{j=0}^N \frac{(-V)^j}{M^{j+1}}\bigg) = -\sum_{j=0}^N \Delta \bigg( \frac{(-V)^j}{M^{j+1}} \bigg) + (-1)^{N+1} \frac{V^{N+1}}{M^{N+1}},
	\end{equation}
	since $(V+M)\sum_{j=0}^N \frac{(-V)^j}{M^{j+1}})$ is telescopic. By assumption, the r.h.s.\ decays as $O(1/\vert x \vert^2)$. Although the resolvent is a bounded operator from $L^\infty(\mathbb{R}^d)$ to itself, this is not quite sufficient to conclude, both because of the singularity at the origin and the insufficient decay at infinity. We shall therefore introduce cutoffs in neighbourhoods of the origin and of infinity.
		
	We choose a radial function $\chi\in C_c^\infty(\mathbb{R}^d)$ with $0\leq \chi \leq 1$, $\chi(x)\equiv 1$ for $1/2\leq \vert x \vert \leq 3/2$ and $\chi(x) \equiv 0$ for $\vert x \vert \in [0,1/4] \cup [2,\infty)$. Let $x_0\in \mathbb{R}^d\setminus \{0\}$ and let
	$$ \tilde{\chi}(x) = \chi(\vert x_0\vert^{-1} x).$$
	With this, we define
	\begin{align*}
		\widetilde{u_M} = (-\Delta +V\tilde{\chi}+M)^{-1} \tilde{\chi}.
	\end{align*}
	Let moreover
	\begin{align*}
		u_{M,L} = (-\Delta +V+M)^{-1} \mathbbm{1}_{B(0,L)}.
	\end{align*}
	Now, 
	\begin{equation} \label{tilde u}
		(-\Delta + V\tilde{\chi} +M) \bigg(\widetilde{u_M}-\sum_{j=0}^N \frac{(-V)^j \tilde{\chi}^{j}}{M^{j+1}}\bigg) = -\sum_{j=1}^N \Delta \bigg( \frac{(-V)^j\tilde{\chi}^{j}}{M^{j+1}} \bigg) - \frac{(-V)^{N+1}\tilde{\chi}^{N+1}}{M^{N+1}},
	\end{equation} 
	which readily implies that 
	\begin{equation} \label{utilde-potential}
		\bigg\Vert \widetilde{u_M}-\sum_{j=0}^N \frac{(-V)^j \tilde{\chi}^{j}}{M^{j+1}} \bigg\Vert_{L^\infty(\mathbb{R}^d)} \leq \frac{C}{\vert x_0\vert^{2}},
	\end{equation}
	where $C$ only depend on $\rho, d, M, \Vert \chi \Vert_{C^2(\mathbb{R}^d)}$ and $\Vert (-\Delta +V+M)^{-1} \Vert_{\infty, \infty}$, but not on $x_0$.
	\newline	
We now need to estimate the difference $u_{M,L}(x_0)-\widetilde{u_M}(x_0)$. We have
	\begin{equation}\label{u-tilde u}
		(-\Delta + V\tilde{\chi}+M)(u_{M,L}-\widetilde{u_M}) = V(\tilde{\chi}-1)u_{M,L}+\mathbbm{1}_{B(0,L)} -\tilde{\chi} 
	\end{equation}
which is nonnegative whenever $L\geq 2\vert x_0 \vert$. Since $(-\Delta +V\tilde{\chi}+M)^{-1}$ is positivity-preserving, this implies that $u_{M,L} -\widetilde{u_M}\geq 0$ for $L\geq 2\vert x_0 \vert$.  Furthermore, if additionally $\vert x_0\vert \geq 2$, then we have
	\begin{align*}
		(-\Delta + V\tilde{\chi} +M)(u_{M,L}-\widetilde{u_M}) = 0
	\end{align*}
	on the annulus $B(0, \vert x_0\vert+1)\setminus B(0,\vert x_0\vert-1)$. Note that $u_{M,L}-\widetilde{u_M}$ is radial and, by the weak maximum principle, monotone in the radius and so
\begin{equation}\label{L2 to pointwise}
\vert u_{M,L}(x_0)-\widetilde{u_M}(x_0) \vert \leq \vert B(0,1/2)\vert^{-1/2} \Vert u_{M,L} - \widetilde{u_M} \Vert_{L^2(B(x_0,1))}.
\end{equation}
	Hence, we are left to obtain a local $L^2$ estimate in a neighbourhood of $x_0$. For this we will use yet another cut-off function. Namely, we choose $\Xi\in C_c^\infty(\mathbb{R})$ such that $\frac{1}{e^2} \leq \Xi(r) \leq 1$ for $\vert r \vert \leq 2$, $\Xi(r)=1$ for $\vert r \vert \leq 1$ and $\Xi(r)=e^{-r}$ for $r>2$. Then we define 
	\begin{align*}
		\Xi_\varepsilon(x) = \Xi(\varepsilon (\vert x -x_0\vert)),
	\end{align*}
	where we will choose $0<\varepsilon\leq 1$ later to be sufficiently small. We compute
	\begin{align*}
		\nabla \Xi_\varepsilon(x) = \Xi'(\varepsilon \vert x -x_0\vert) \varepsilon \frac{x-x_0}{\vert x -x_0 \vert}
	\end{align*}
	and
	\begin{align*}
		\Delta \Xi_\varepsilon(x) = \Xi''(\varepsilon \vert x -x_0\vert) \varepsilon^2 + \varepsilon\frac{d-1}{\vert x -x_0\vert} \Xi'(\varepsilon \vert x-x_0\vert). 
	\end{align*}
	Thus, for $\vert x-x_0\vert \geq 2/\varepsilon$ we have
	\begin{align*}
		\vert \nabla \Xi_\varepsilon(x)\vert +\vert \Delta \Xi_\varepsilon(x) \vert \leq \varepsilon\left[1+ \varepsilon\left( 1+ \frac{d-1}{2} \right)\right] \Xi_\varepsilon(x).
	\end{align*}
	As $1/e^2\leq \Xi(r) \leq 1$ for $\vert r \vert \leq 2$, we get for all $x\in \mathbb{R}^d$
	\begin{equation} \label{Xi}
		\vert \nabla \Xi_\varepsilon(x) \vert + \vert \Delta \Xi_\varepsilon(x) \vert
		\leq \varepsilon\left[ e^2\left(\Vert \Xi' \Vert_{L^\infty([0,2])}+ \Vert \Xi'' \Vert_{L^\infty([0,2])} \right) + \left( 2+ \frac{d-1}{2}\right) \right] \Xi_\varepsilon(x)= C \varepsilon \Xi_\varepsilon(x),
	\end{equation}
	with $C$ independent of $\varepsilon, x_0,x$.
	
With these estimates in hand, we go back to~(\ref{u-tilde u}). 	We will write $\Vert \cdot \Vert_{p,q}$ for the operator norm $\Vert \cdot \Vert_{\caL(L^p(\mathbb{R}^d), L^q(\mathbb{R}^d))}$. Furthermore, we will repeatedly use the fact (see \cite[Lemma 2.1]{perelmuter1981positivity}) that for all $V\leq W\leq 0$ and all $f\in L^2(\mathbb{R}^d)$, 
	\begin{equation*} 
		(-\Delta+W+M)^{-1} \vert f \vert \leq (-\Delta + V+M)^{-1} \vert f \vert,
	\end{equation*}
	where the inequality has to be understood pointwise almost everywhere. This implies in particular the operator monotonicity of $V\mapsto (-\Delta + V +M )^{-1}$ in $L^2(\bbR^d)$. In turn, the same holds by density for $\Vert(-\Delta + V +M )^{-1}\Vert_{1,2}$.

Commuting the multiplication operator $\Xi_\varepsilon$ through the resolvent yields	
	\begin{multline}\label{Estimate with cutoff}
		\Xi_\varepsilon (u_{M,L}-\widetilde{u_M}) 
		= (-\Delta +V\tilde{\chi}+M)^{-1} \Xi_\varepsilon (V(\tilde{\chi}-1)u_{M,L}+\mathbbm{1}_{B(0,L)} -\tilde{\chi}) \\
		\quad+ (-\Delta +V\tilde{\chi}+M)^{-1} \left[-2\nabla\cdot(\nabla  \Xi_\varepsilon) + (\Delta \Xi_\varepsilon) \right] (u_{M,L}-\widetilde{u_M}).
	\end{multline}
	Using \eqref{Xi} we can estimate the last term in $L^2$ by
	\begin{multline}\label{bound on commutator}
		\Vert (-\Delta +V\tilde{\chi}+M)^{-1} \left[-2  \nabla\cdot(\nabla \Xi_\varepsilon) + (\Delta \Xi_\varepsilon) \right] (u_{M,L}-\widetilde{u_M}) \Vert_{L^2(\mathbb{R}^d)} \\
		\leq \left[2\Vert (-\Delta +V\tilde{\chi}+M)^{-1} \nabla \Vert_{2,2} + \Vert (-\Delta +V\tilde{\chi}+M)^{-1} \Vert_{2,2} \right] C\varepsilon \Vert \Xi_\varepsilon(u_{M,L}-\widetilde{u_M}) \Vert_{L^2(\mathbb{R}^d)}.
	\end{multline}

Since $V\leq V\tilde\chi\leq 0$, and with the fact that $V$ is form bounded by $-\Delta$ with relative bound zero, we conclude that
	\begin{align*}
		\Vert (-\Delta + V\tilde{\chi} +M)^{-1} \nabla \Vert_{2,2}
		\leq \Vert (-\Delta + V +M)^{-1/2} \Vert_{2,2} \Vert (-\Delta+V+M)^{-1/2} \nabla \Vert_{2,2} < \infty.
	\end{align*}
	Hence, the term of~(\ref{bound on commutator}) in $[\cdots]$ is bounded and there is $0<\varepsilon\leq 1$ such that
\begin{equation*}
\Vert (-\Delta +V\tilde{\chi}+M)^{-1} \left[-2  \nabla\cdot(\nabla \Xi_\varepsilon) + (\Delta \Xi_\varepsilon) \right] (u_{M,L}-\widetilde{u_M}) \Vert_{L^2(\mathbb{R}^d)}
\leq \frac{1}{2}\Vert \Xi_\varepsilon(u_{M,L}-\widetilde{u_M}) \Vert_{L^2(\mathbb{R}^d)}.
\end{equation*}
With this, (\ref{Estimate with cutoff}) yields
\begin{equation*}
\Vert \Xi_\varepsilon (u_{M,L}-\widetilde{u_M}) \Vert_{L^2(\mathbb{R}^d)} 
		\leq 2\Vert (-\Delta +V\tilde{\chi}+M)^{-1} \Xi_\varepsilon (V(\tilde{\chi}-1)u_{M,L}+\mathbbm{1}_{B(0,L)} -\tilde{\chi}) \Vert_{L^2(\mathbb{R}^d)}
\end{equation*}
We now split $V(\tilde{\chi}-1)u_{M,L} = V(\tilde{\chi}-1) (\mathbbm{1}_{\mathbb{R}^d\setminus B(0,1)} + \mathbbm{1}_{B(0,1)})u_{M,L}$. Using that 
	$$\vert V(\tilde{\chi}-1) \mathbbm{1}_{\mathbb{R}^d\setminus B(0,1)}u_{M,L}+ \mathbbm{1}_{B(0,L)}-\tilde{\chi} \vert \leq \mathbbm{1}_{\mathbb{R}^d \setminus B(x_0, \vert x_0\vert/2)},$$ 
	the definition of $\tilde{\chi}$ and the fact that $ (-\Delta+V+M)^{-1}$ is bounded from $L^1(\mathbb{R}^d)$ to $L^2(\mathbb{R}^d)$ (see \cite[Theorem B.2.2.]{simon1982schrodinger}) we conclude that
	\begin{align*}
		\Vert \Xi_\varepsilon (u_{M,L} -\widetilde{u_M}) \Vert_{L^2(\mathbb{R}^d)}
		&\leq 2\Vert (-\Delta +V+M)^{-1} \Vert_{2,2} \Vert  \Xi_\varepsilon \mathbbm{1}_{\mathbb{R}^d \setminus B(x_0, \vert x_0\vert/2)} \Vert_{L^2(\mathbb{R}^d)} \\
		&\quad +2 \Vert (-\Delta+V+M)^{-1} \Vert_{1,2}  \Vert \Xi_\varepsilon u_{M,L} \Vert_{L^\infty(B(0,1))} \Vert V \Vert_{L^1(B(0,1))} \\ 
		&\leq C e^{-\varepsilon \vert x_0 \vert/4},
	\end{align*}
because of our choice of $\Xi_\varepsilon$, where $C$ does not depend on $x_0$ or $L$.
Using~(\ref{L2 to pointwise}), the fact that $1/e^2\leq \Xi_\varepsilon(x) \leq 1$ for $x\in B(x_0,1)$, we finally obtain for $\vert x_0\vert \geq 4/\varepsilon$
\begin{align*}
	\vert u_{M,L}(x_0)-\widetilde{u}(x_0) \vert \leq Ce^{-\varepsilon \vert x_0 \vert/4},
\end{align*}
where $\varepsilon, C$ do not depend on $L$ or $x_0$ (but on $M$). Combining this with \eqref{utilde-potential} and noting that $\tilde{\chi}=1$ in a neighbourhood of $x_0$ we get for $\vert x_0\vert$ sufficiently large (independent of $L$)
\begin{align*}
	\vert u_{M,L}(x_0)-\sum_{j=0}^N \frac{(-V(x_0))^j}{M^{j+1}} \vert \leq \frac{C}{\vert x_0 \vert^2},
\end{align*}
with $C$ independent of $L, x_0$.

As $u_{M,L}$ converges pointwise to $u_M$ for $L\rightarrow \infty$, we conclude that
\begin{align*}
	\vert u_M(x)-\sum_{j=0}^N \frac{(-V(x))^j}{M^{j+1}} \vert \leq \frac{C}{\vert x \vert^2},
\end{align*}
for all $\vert x \vert$ sufficiently large. Recalling now that $-V(x) = \vert x\vert^{-\rho}$, this readily implies that there exists a polynomial $P$ with real coefficients of degree at most $N$ and $P(0)=0=P'(0)$ such that 
\begin{equation}\label{final estimate on atomic uM}
	\frac{1}{u_M(x)}-M-V(x)-P(\vert x \vert^{-\rho}) = O(\vert x \vert^{-2})
\end{equation}
for $\vert x \vert \rightarrow \infty$. Moreover, the boundedness of $1/u_M$ tells us that $\frac{1}{u_M(x)}-M$ satisfies \eqref{integrability condition}.

Altogether, we obtain first that
\begin{equation*}
	\caN^{V}(\mu) = (1+o(1)) \left[ (2\sqrt{\pi})^d \Gamma\left( \frac{d}{2}+1 \right) \right]^{-1} \int_{\mathbb{R}^d} \left( \mu - V(x) \right)_+^{d/2} dx
\end{equation*}
as $\mu\to 0^-$, since $V$ clearly satisfies the assumptions of Theorem~\ref{thm:BrownellClark}. By~(\ref{final estimate on atomic uM}), this further implies that
\begin{equation*}
	\caN^{V}(\mu) = (1+o(1)) \left[ (2\sqrt{\pi})^d \Gamma\left( \frac{d}{2}+1 \right) \right]^{-1} \int_{\mathbb{R}^d} \left( \mu - \left(\frac{1}{u_M(x)}-M\right) \right)_+^{d/2} dx.
\end{equation*}
Finally, (\ref{final estimate on atomic uM}) also implies that Theorem~\ref{thm:BrownellClark} applies to $\frac{1}{u_M}-M$ and therefore that
\begin{equation*}
	\caN^{V}(\mu) = (1+o(1)) \caN^{1/u_M-M}(\mu),
\end{equation*}
which concludes the proof.
\end{proof}		

In the case of the Coulomb potential, the proof above yields that the effective potential asymptotically equals the original potential itself, namely
\begin{equation*}
\frac{1}{u_M(x)} - M = -\frac{1}{\vert x\vert} + O(\vert x\vert^{-2}).
\end{equation*}

\subsection{Ground state energy for the square well}

We now turn to $V(x)=-\varepsilon \mathbbm{1}_{B(0,\delta)}$ for $\varepsilon, \delta>0$ and focus on the case $d=1$. The ground state energy of $-\Delta+\varepsilon \mathbbm{1}_{\mathbb{R}\setminus (-\delta,\delta)}$ is, for $\varepsilon$ small enough, given by the smallest positive solution of the transcendental equation
\begin{align*}
	\sqrt{\varepsilon-x}=\sqrt{x} \tan(\sqrt{x} \delta).
\end{align*}
This can be rewritten as
\begin{align*}
	\varepsilon = F(x),\quad\text{where}\quad F(x) = x \left( 1+ \tan(\sqrt{x} \delta)^2\right).
\end{align*}
The function $F$ has an analytic extension at the origin such that $F(x) = x+ x^2 \delta^2 + O(x^3)$, and hence
\begin{align*}
	x=F^{-1}(\varepsilon) = \varepsilon - \delta^2 \varepsilon^2 + O(\varepsilon^3).
\end{align*}
This implies that as $\varepsilon \to 0^+$
\begin{align*}
	\inf \mathrm{spec}(-\Delta - \varepsilon \mathbbm{1}_{(-\delta, \delta)})
	=  \inf \mathrm{spec}(-\Delta + \varepsilon \mathbbm{1}_{\mathbb{R}\setminus(-\delta, \delta)}) - \varepsilon
	= -\delta^2 \varepsilon^2 + O(\varepsilon^3).
\end{align*}

Next we compute the shifted landscape function, which is an even solution of
\begin{equation*}
-f''(x) + cf(x) = 1
\end{equation*}
where the constant $c$ depends on $M$ and whether $x$ is inside or outside of the well. 

We start with the case $M>\varepsilon$, in which case
\begin{align*}
	u_M(x) = \begin{cases}
		\frac{1}{M-\varepsilon}  + a_2 \cosh(\sqrt{M-\varepsilon}x),& \vert x \vert < \delta,\\
		\frac{1}{M} + b_1 e^{-\sqrt{M} \vert x \vert} + b_2 \cosh(\sqrt{M}x),& \vert x \vert> \delta.
	\end{cases}
\end{align*}
As $u_M$ is bounded, we get $b_2=0$. The other constants are set by the of $u_M$ differentiability at $\vert x \vert = \delta$:
\begin{align*}
	a_2 &= \frac{\frac{1}{M}-\frac{1}{M-\varepsilon}}{\cosh(\sqrt{M-\varepsilon}\delta)+\frac{\sqrt{M-\varepsilon}}{\sqrt{M}}\sinh(\sqrt{M-\varepsilon}\delta)} \\
	b_1 &= -\frac{\sqrt{M-\varepsilon}}{\sqrt{M}} \sinh(\sqrt{M-\varepsilon}\delta) e^{\sqrt{M}\delta} a_2.
\end{align*}
The maximum of $u_M$ is attained at the origin and so
\begin{equation*}
\inf_{\mathbb{R}} \frac{1}{u_M}-M 
=\varepsilon \frac{M-\varepsilon}{M} \frac{1}{\cosh(\sqrt{M-\varepsilon}\delta) + \frac{\sqrt{M-\varepsilon}}{\sqrt{M}} \sinh(\sqrt{M-\varepsilon}\delta)} -\varepsilon + o(\varepsilon).
\end{equation*}
Hence, for any $M>\epsilon$ --- in particular arbitrarily large! ---  the first approximation of the ground state energy given by~Corollary \ref{cor:iteration} is already of order $\varepsilon$.

For $M=\varepsilon$, the computation in~\cite[Section 6]{bachmann2023counting} yields
\begin{align*}
	\inf_{\mathbb{R}} \frac{1}{u_\varepsilon}-\varepsilon
	= \frac{-\delta \varepsilon^{3/2}-\frac{\delta^2 \varepsilon^2}{2}}{1+\delta \sqrt{\varepsilon} +\frac{\delta^2 \varepsilon}{2}}.
\end{align*}
Hence, for small $\varepsilon$, this yields an approximation of order $\varepsilon^{3/2}$.

Finally if $M<\varepsilon$, then
\begin{align*}
	u_M(x) = \begin{cases}
		\frac{1}{M-\varepsilon} + a \cos(\sqrt{\varepsilon-M} x),& \vert x \vert < \delta,\\
		\frac{1}{M} + b e^{-\sqrt{M} \vert x \vert},& \vert x \vert> \delta.
	\end{cases}
\end{align*}
Differentiability yields now
\begin{align}
	a &= \frac{\frac{1}{M}- \frac{1}{M-\varepsilon}}{\cos(\sqrt{\varepsilon-M}\delta)-\frac{\sqrt{\varepsilon-M}}{\sqrt{M}}\sin(\sqrt{\varepsilon-M}\delta)} \label{a} \\
	b&= \frac{\sqrt{\varepsilon-M}}{\sqrt{M}} \sin(\sqrt{\varepsilon-M}\delta) e^{\sqrt{M}\delta}a \label{b} 
\end{align}
This allows us to exhibit explicitly the blowup of~(\ref{eq:blowup}) in Corollary~\ref{cor:iteration}. For this we first note that $\varepsilon+E_0$ is the ground state energy of the operator $-\Delta+\varepsilon \mathbbm{1}_{(-\infty,-\delta)\cup (\delta,\infty)}(x)$ and hence
\begin{align*}
	\sqrt{-E_0} = \sqrt{\varepsilon+E_0} \tan(\sqrt{\varepsilon+E_0}\delta).
\end{align*}
It follows from this and~(\ref{a}) that $\lim_{M\to (-E_0)^+} a = \infty$ and the same for $b$ by~(\ref{b}). Hence $\lim_{M\to (-E_0)^+} u_M(x) = \infty$, or in other words for the effective potential
\begin{equation*}
\lim_{M\to (-E_0)^+}\left(\frac{1}{u_M(x)}-M\right) = E_0
\end{equation*}
for all $x\in\bbR$.
		
		\section{Upper and lower Lieb-Thirring bounds}\label{sect:LT}
		
		In this section we recall the standard argument for the derivation of the Lieb-Thirring inequality from estimates on the eigenvalue counting function. First we rewrite the sum of eigenvalues in terms of an integral.
		
		\begin{lemma} \label{lm: sum to integral}
			For any $\gamma>0$ we have
			\begin{equation} \label{eq:sumtointegral}
				\mathrm{tr}((-\Delta +V)_-^\gamma)= \gamma \int_0^\infty \lambda^{\gamma -1} \mathcal{N}^V(-\lambda) d\lambda.
			\end{equation}
		\end{lemma}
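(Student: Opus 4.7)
The plan is to use a standard layer-cake representation applied to the negative eigenvalues of $-\Delta+V$. Let $\{E_j\}_j$ denote the negative eigenvalues of $-\Delta+V$ below the essential spectrum, repeated with multiplicity, ordered so that $E_1\leq E_2\leq \cdots <0$ (possibly a finite sequence or the empty sequence). By the functional calculus, $\operatorname{tr}((-\Delta+V)_-^\gamma) = \sum_j |E_j|^\gamma$, where the sum is understood to be $+\infty$ if it does not converge; in that case the right-hand side of \eqref{eq:sumtointegral} will also be shown to be $+\infty$, so the identity still holds.

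For each $j$, the elementary identity
\begin{equation*}
|E_j|^\gamma = \gamma \int_0^\infty \lambda^{\gamma-1} \mathbbm{1}_{\{\lambda < |E_j|\}}(\lambda)\, d\lambda
\end{equation*}
reduces matters to summing the indicators. Applying Tonelli's theorem (all integrands are nonnegative), I would interchange the sum and the integral to obtain
\begin{equation*}
\sum_j |E_j|^\gamma = \gamma \int_0^\infty \lambda^{\gamma-1}\, \#\{j : |E_j|>\lambda\}\, d\lambda.
\end{equation*}

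It remains to identify the counting function inside the integral with $\mathcal{N}^V(-\lambda)$. Since $-\lambda<0$ lies below the bottom of the essential spectrum for $\lambda$ sufficiently small (and otherwise contributes trivially once $-\lambda$ is no longer in the discrete part), the Min-Max principle gives $\mathcal{N}^V(-\lambda) = \#\{j : E_j \leq -\lambda\} = \#\{j : |E_j|\geq \lambda\}$. The only discrepancy between $\#\{j : |E_j|>\lambda\}$ and $\#\{j : |E_j|\geq \lambda\}$ occurs on the at most countable set $\{|E_j|\}_j$, which has Lebesgue measure zero and so does not affect the integral.

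The only mildly subtle point is handling the case where $-\Delta+V$ has infinitely many negative eigenvalues accumulating at $0$: Tonelli handles the nonnegative integrand without issue, and the function $\lambda\mapsto \mathcal{N}^V(-\lambda)$ is integrable against $\lambda^{\gamma-1}$ on $(0,\infty)$ if and only if $\sum_j |E_j|^\gamma<\infty$, so both sides of \eqref{eq:sumtointegral} are simultaneously finite or infinite. No further analytic input beyond the spectral theorem and Tonelli is required.
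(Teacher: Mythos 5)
Your layer-cake/Tonelli computation is sound and is essentially the paper's own argument in a cleaner dress: the paper integrates the piecewise-constant function $\lambda\mapsto\mathcal{N}^V(-\lambda)$ over $[\vert\lambda_N\vert,\infty)$ and then controls the truncation term involving $\vert\lambda_N\vert^\gamma$, which is exactly your interchange of sum and integral carried out by hand; invoking Tonelli avoids having to argue that the truncation term vanishes. However, there is a genuine gap at your first step. You assert $\operatorname{tr}((-\Delta+V)_-^\gamma)=\sum_j\vert E_j\vert^\gamma$ with $E_j$ the eigenvalues below the essential spectrum, but the lemma carries no hypothesis forcing $\inf\mathrm{spec}_{\mathrm{ess}}(-\Delta+V)\geq 0$ (a Kato-class potential need not vanish at infinity; take $V\equiv-1$). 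If the essential spectrum reaches below zero, that identity is false ($V\equiv-1$ gives trace $+\infty$ while your sum is empty), and the identification $\mathcal{N}^V(-\lambda)=\#\{j:E_j\leq-\lambda\}$ fails for all $0<\lambda\leq-\inf\mathrm{spec}_{\mathrm{ess}}(-\Delta+V)$, a set of positive measure; your parenthetical that $-\lambda$ lies below the essential spectrum ``for $\lambda$ sufficiently small'' has the quantifier backwards, and ``contributes trivially'' is not an argument.

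The repair is precisely the paper's opening case distinction: if $\mathcal{N}^V(-\lambda_0)=\infty$ for some $\lambda_0>0$, then both sides of \eqref{eq:sumtointegral} are $+\infty$ --- the right side because the integrand is infinite on $(0,\lambda_0]$, the left side because $\mathbbm{1}_{(-\infty,-\lambda_0]}(-\Delta+V)$ has infinite rank and $(-\Delta+V)_-^\gamma\geq\lambda_0^\gamma$ on its range. In the remaining case, $\mathcal{N}^V(-\lambda)<\infty$ for every $\lambda>0$, so the negative spectrum is a finite or countable set of eigenvalues of finite multiplicity; there your trace identity and the min--max identification hold for every $\lambda>0$, and your Tonelli argument (with the $>$ versus $\geq$ discrepancy dismissed on a null set, as you do) closes the proof. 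With that one case added, your proof is correct and coincides in substance with the paper's.
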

		\begin{proof}
			If $\mathcal{N}^V(-\lambda)=\infty$ for any $\lambda>0$, then both expression in \eqref{eq:sumtointegral} are infinity. On the other hand, if $\mathcal{N}^V(-\lambda)<\infty$ for all $\lambda>0$, then the negative spectrum consists of countably many eigenvalues of finite multiplicity. We label them as $\lambda_1\leq \lambda_2 \leq ... <0$. Then we can write
			\begin{align*}
				\mathrm{tr}((-\Delta+V)_-^\gamma) = \sum_{j} \vert \lambda_j \vert^\gamma.
			\end{align*}
Since $\mathcal{N}^V$ is a piecewise constant function, we have for any $N\in \mathbb{N}$
\begin{equation*}
\gamma \int_{\vert \lambda_N \vert}^\infty \lambda^{\gamma-1} \mathcal{N}^V(-\lambda) d\lambda 
= \sum_{j=1}^{N-1} \vert \lambda_j \vert^\gamma + (N-1) \vert \lambda_N \vert^\gamma,
\end{equation*}
and the remainder term vanishes if $\sum_{j} \vert \lambda_j \vert^\gamma<\infty$, concluding the proof.
		\end{proof}
With this, it now a simple calculation to derive two-sided Lieb-Thirring bounds from the estimates~\eqref{CLR}.	
		\begin{proof}[Proof of Corollary \ref{cor:LiebThirring}]
Lemma \ref{lm: sum to integral} and the upper bound \eqref{CLR} yield immediately that
		\begin{multline}
			\mathrm{tr}((-\Delta+V)_-^\gamma)
			= \gamma \int_0^{ \vert E_0 \vert} \lambda^{\gamma-1} \mathcal{N}^{V+\delta+\vert E_0 \vert}(-\lambda+\delta+\vert E_0 \vert) d\lambda \\
			\leq C_{0,d}^\frac{d}{2}\gamma \int_0^{\vert E_0 \vert} \lambda^{\gamma-1} (-\lambda+\delta+\vert E_0\vert)^\frac{d}{2} \left\vert \left\{ x\in \mathbb{R}^d:\frac{1}{u_M(x)} \leq C_{0,d}(-\lambda+\delta + \vert E_0\vert) \right\} \right\vert d\lambda \label{LT initial upper bound}.
		\end{multline}
Now since $\gamma\geq 1$, we use that $\lambda^{\gamma-1}\leq (\lambda+\delta)^{\gamma-1}$. We split the integral in two. On $[0,\frac{\vert E_0\vert}{2}]$, we use that $\vert E_0\vert-\lambda+\delta\leq \vert E_0\vert +\delta \leq \frac{\vert E_0\vert +\delta}{\delta}(\lambda + \delta)$, while $\vert E_0\vert-\lambda+\delta\leq \frac{\vert E_0\vert}{2} + \delta \leq \lambda +\delta$ holds on $[\frac{\vert E_0\vert}{2},\vert E_0\vert]$. Writing the condition $\frac{1}{u_M(x)} \leq C_{0,d}(-\lambda+\delta + \vert E_0\vert)$ as $\lambda + \delta \leq \vert E_0\vert + 2\delta - \frac{1}{C_{0,d}u_M(x)}$ and changing variables to $s=\lambda + \delta$ yields that if $\gamma\geq 1$, then
\begin{align*}
\mathrm{tr}((-\Delta+V)_-^\gamma)
&\leq C_{0,d}^\frac{d}{2}\gamma \left(1+\frac{\vert E_0\vert}{\delta}\right)^\frac{d}{2} \int_\delta^{\vert E_0 \vert +\delta}s^{\gamma-1+\frac{d}{2}}
\left\vert\left\{ x\in \mathbb{R}^d: \vert E_0\vert +2\delta - \frac{1}{C_{0,d}u_M(x)} \geq s\right\}\right\vert \\
&\leq C_{0,d}^\frac{d}{2}\frac{\gamma}{\gamma + \frac{d}{2}}\left(1+\frac{\vert E_0\vert}{\delta}\right)^\frac{d}{2}\int_{\bbR^d}\left( \vert E_0\vert +2\delta - \frac{1}{C_{0,d}u_M(x)} \right)_+^{\frac{d}{2}+\gamma} dx
\end{align*}
by the layer cake representation.

Similarly, for the lower bound, we start as in~(\ref{LT initial upper bound}) but with $c_{0,d}$ instead of $C_{0,d}$. Here, $\vert E_0\vert-\lambda+\delta\geq \frac{\vert E_0\vert}{2}+\delta > \lambda$ on $[0,\frac{\vert E_0\vert}{2}]$ while $\vert E_0\vert-\lambda+\delta\geq \delta \geq \frac{\delta}{\vert E_0\vert}\lambda$ on $[\frac{\vert E_0\vert}{2},\vert E_0\vert]$. We conclude that for $\gamma > 0$, 
		\begin{align*}
			\mathrm{tr}((-\Delta +V)_-^\gamma)
			&\geq \min\left\{1, \frac{\delta}{\vert E_0\vert}\right\}^\frac{d}{2} c_{0,d}^{d/2} \gamma \int_0^{E_0} \lambda^{\frac{d}{2}+\gamma-1} \left\vert \left\{x\in \mathbb{R}^d: \lambda \leq \vert E_0\vert+\delta - \frac{1}{c_{0,d}u_M(x)}\right\} \right\vert d\lambda \\
			&= \min\left\{1, \frac{\delta}{\vert E_0\vert}\right\}^\frac{d}{2} c^{d/2} \frac{\gamma}{\gamma + \frac{d}{2}} \int_{\mathbb{R}^d} \left( \vert E_0\vert +\delta - \frac{1}{c_{0,d}u_M(x)} \right)_-^{\frac{d}{2}+\gamma} dx.
		\end{align*}
		For the last equality we used the bound \eqref{CLR} to conclude that the measure vanishes for all $\lambda>\vert E_0\vert$ since $\mathcal{N}^V(-\lambda)=0$ for those $\lambda$.
	\end{proof}

\begin{proof}[Proof of Corollary \ref{cor:Kato LiebThirring}]	
We follow the same strategy as above, using~(\ref{Lower improved Kato},\ref{CLR improved}) instead of~(\ref{CLR}). For the lower bound, we immediately have
\begin{align*}
\mathrm{tr}((-\Delta+V)_-^\gamma) 
&=\gamma\int_0^{\vert E_0\vert}\lambda^{\gamma-1}\caN^{V}(-\lambda)d\lambda \\
&\geq C_c^\frac{d}{2}\gamma \int_0^{\vert E_0\vert}\lambda^{\gamma-1+\frac{d}{2}}\left\vert\left\{x\in\bbR^d:\lambda\leq \frac{\widetilde C_H}{c}\left(M-\frac{1}{u_M(x)}\right)\right\}\right\vert\\
&= C_c^\frac{d}{2}\frac{\gamma}{\gamma + \frac{d}{2}}\bigg(\frac{\widetilde C_H}{c}\bigg)^{\gamma + \frac{d}{2}}\int_{\bbR^d}\left(\frac{1}{u_M(x)}-M\right)_-^{\gamma + \frac{d}{2}}dx.
\end{align*}
For the upper bound, we follow the proof of Theorem~\ref{thm:Kato}(iii) to get
\begin{equation*}
\caN^V(\mu)\leq \caN^{-A_M^2(1/u_M-M)_-}\left( \mu+\varepsilon \right).
\end{equation*}
With this,
\begin{equation*}
\mathrm{tr}((-\Delta+V)_-^\gamma) 
\leq \gamma\int_0^{\vert E_0\vert}\lambda^{\gamma-1}\caN^{-A_M^2(1/u_M-M)_-}\left( -\lambda+\varepsilon \right)d\lambda
\end{equation*}
Since the counting function is c\`adl\`ag by definition~(\ref{def:counting}), we let $\varepsilon\to0^+$ and conclude that
\begin{equation*}
\mathrm{tr}((-\Delta+V)_-^\gamma) \leq \mathrm{tr}((-\Delta-A_M^2(1/u_M-M)_-)_-^\gamma)
\end{equation*}
and we use the standard Lieb-Thirring inequality to conclude. 
\end{proof}

\paragraph{Data Availability Statement.}
Data sharing is not applicable to this article as no new data were created or analyzed in this study.

\paragraph{Acknowledgements.} The three authors were supported by NSERC of Canada. S.B.\ would like to thank S.~Mayboroda and D.~Arnold for inspiring discussions.

		\bibliographystyle{plain}
		\bibliography{biblio}

	\end{document}